\pdfoutput=1
\documentclass[11pt]{article}
\usepackage[T1]{fontenc}
\usepackage[utf8]{inputenc}
\IfFileExists{lmodern.sty}{\usepackage{lmodern}}{}
\usepackage{amsmath,amssymb,amsthm}
\usepackage[a4paper,margin=2.9cm]{geometry}
\usepackage{graphicx}
\usepackage{xcolor}
\IfFileExists{lmodern.sty}{\usepackage{microtype}}{}
\usepackage[font=small,labelfont=bf]{caption}
\usepackage{tikz}
\usetikzlibrary{arrows.meta,positioning}
\usepackage[hidelinks]{hyperref}

\numberwithin{equation}{section}
\newtheorem{proposition}{Proposition}[section]
\newtheorem{lemma}{Lemma}[section]
\theoremstyle{definition}
\newtheorem{fact}{Fact}
\newtheorem{definition}{Definition}[section]
\newtheorem{remark}{Remark}[section]

\newenvironment{keywords}{\par\medskip\noindent\textbf{Key words.}\ }{\par}
\newenvironment{AMS}{\par\medskip\noindent\textbf{MSC codes.}\ }{\par\medskip}

\newcommand{\E}{\mathbb{E}}
\newcommand{\daw}{\operatorname{D}}
\newcommand{\erfi}{\operatorname{erfi}}
\newcommand{\sG}{s_{G}}

\title{Optimal Trading of Microstructure Mean Reversion}
\author{Lucas Rabechini Amaral\thanks{Preliminary draft; comments
welcome. E-mail: \texttt{lucasra@gmail.com}. The views expressed are
solely the author's and do not reflect those of any institution with
which he is or has been affiliated; the paper is theoretical, uses no
proprietary or confidential information, and nothing herein constitutes
investment advice.}}
\date{This version: July 2026}

\begin{document}
\maketitle

\begin{abstract}
At the scale of seconds the observed mid carries a stationary,
mean-reverting error around a latent efficient price. We build an order
book whose own flow produces that error, and solve for the trading rule
that maximises the long-run average profit rate, net of the bid--ask
spread. In a liquid large-tick asset the queue at the touch is deep. The
spread is therefore one tick or two, and we assume our one-lot trader
does not enter the book's intensities. The mid is the bid plus half the
spread. It moves on the half-tick grid, and the spread is exactly its
parity: tight at one tick when the mid, measured in ticks, sits at a
half-integer; open at two when it sits at an integer. One coordinate therefore carries the problem: the
gap $G$ between mid and efficient price. We take that price to be an
exogenous Brownian martingale, and $G$ to be observable. The mid is a pure jump
process whose move intensities lean toward the efficient price. Under
one balanced-response condition, which equalises the book's corrective
drift across parities, mean reversion of $G$ is a theorem: its
conditional mean and stationary covariance are exactly those of an
Ornstein--Uhlenbeck process of reversion rate $\alpha$ and stationary
standard deviation $\sG$ (the pair the gap's autocovariance
identifies). Its
paths are not: the mid jumps. Passage times are therefore evaluated on
the Gaussian diffusion those two moments define, at an error we bound on
the reward side and leave heuristic on the timing side. A symmetric band
of half-width $\theta$ buys when the gap reaches $-\theta$, sells when it
reaches $+\theta$, and holds inside. On the surrogate such a band is
optimal among all admissible strategies, by results from the switching
literature; on the jump process itself that reduction remains a
conjecture. Write $\phi$ for the tight-book half-spread. Optimising over $\theta$
gives, to leading order in $\sG/\theta$, the optimal half-width $\theta^{*}$ and the
profit rate $R^{*}$ it earns:
\[
\theta^{*}\big(\theta^{*}-\phi\big)=\sG^{2}\,,
\qquad
R^{*}=\alpha\,\sG\sqrt{2/\pi}\;e^{-\theta^{*2}/2\sG^{2}}\,.
\]
Threshold times margin equals the stationary variance of the gap. Trading
as soon as the gap covers the spread earns zero: all profit is the option
value of waiting.
\end{abstract}

\begin{keywords}
market microstructure, limit order book, large-tick assets, mean
reversion, optimal switching, ergodic control, renewal--reward
\end{keywords}

\begin{AMS}
60G55, 60J25, 60J60, 93E20, 91G80
\end{AMS}

\noindent\textbf{JEL codes.} C61, G12, G14.

\section{Introduction}\label{sec:intro}

Optimal trading rests on two questions. The first is empirical: how do
the mid, the spread, and the order flow actually move at the time scale
where trading decisions live? The
second is the control problem: given those dynamics, what is the best
strategy, and what does it earn? This paper answers the second question
inside a model built, piece by piece, from the known answers to the
first.

At the intraday scale, four dynamics are documented across four
decades of microstructure evidence. (i) \emph{The mid's moves revert}:
successive moves are negatively autocorrelated at the finest time
scales \cite{hansenlunde06}, the fact that the mutually exciting price
models of Bacry et al. \cite{bacry13a,bacry13b}, built on Hawkes
processes \cite{hawkes71}, are designed to reproduce; we used the same
family for the clustering of large-order flow \cite{rabechini19}.
(ii) \emph{Displacements are transient}: a burst of orders pushes the
mid away from value, and part of the push reverts as the book
replenishes \cite{obizhaevawang13,bouchaud09}.
(iii) \emph{In liquid large-tick assets the spread is pinned}: when
the tick is not small relative to the volatility, the spread sits at
one tick almost all of the time and opens to two for brief episodes
\cite{dayrirosenbaum15,contdelarrard13}.
(iv) \emph{Beneath the noise, a diffusion}: strip the trading noise
statistically and what remains behaves as a Brownian semimartingale.
The realised-volatility literature models the intraday efficient price
this way and validates it on high-frequency data
\cite{andersen03,zhangmykland05}.
Together the four force the classical reading, from Roll \cite{roll84}
to Hasbrouck \cite{hasbrouck95}: the observed price is the diffusion of
(iv) plus a stationary error carrying (i) and (ii). The natural state variable is the \emph{gap}: the distance
between the observed mid and that latent efficient price.

The theory of optimal trading grew around these facts in four threads.
(1) Optimal execution (Almgren and Chriss \cite{almgren01}) schedules
a large order against the trade-off between impact and risk; (2) optimal
market making (Avellaneda and Stoikov \cite{as08}) quotes around
inventory risk, solved explicitly by Gu\'eant, Lehalle and
Fernandez-Tapia \cite{glf13} and developed into a toolkit by Cartea,
Jaimungal and Penalva \cite{cjp15} and Gu\'eant \cite{gueant16}. Guilbaud and
Pham \cite{guilbaudpham13} bring the tick grid into this thread (the
tick-valued spread a Markov chain, the trading in limit and market
orders), yet the mid there follows its own exogenous dynamics, pulled by
nothing. Throughout this thread, the
book is the venue but the price it quotes is modelled as a martingale:
the dynamics above enter as friction, not as object. (3) Another thread
trades the reversion itself: an Ornstein--Uhlenbeck spread, signal, or
reference price, taken as given. Threshold
rules are optimal there \cite{zhangzhang08,zervos13,leungli16}; Bertram
\cite{bertram10} maximises the long-run rate of the trading cycle
through first-passage times, the criterion we adopt; Lipton and
L\'opez de Prado \cite{liptonlopez20} give closed forms for a single
round trip; and, nearest on the making side, Ahuja, Papanicolaou, Ren
and Yang \cite{ahuja17} quote around a mean-reverting reference price.
That thread supplies the control theory we use, threshold
optimality and the renewal criterion, and takes the reverting object as
given. (4) The closest line builds the book itself around a
latent efficient price: moves fire at the boundaries of uncertainty
zones \cite{robertrosenbaum11}, the order flow estimates the price
\cite{delattre13}, the queue-reactive book re-centres its intensities
around a reference one \cite{huanglehalle15}. Sfendourakis
\cite{sfendourakis25} unifies the family and stabilises the mid around
the efficient price; Pulido, Rosenbaum and Sfendourakis \cite{pulido26}
solve full-information market making inside uncertainty zones. There
the book is the object of study, and the trading solved on it is market
making, around inventory risk.

This paper works the seam between threads (3) and (4): a book of the
fourth kind, on which the trading problem of the third is solved. Facts
(iii) and (iv) are the inputs:
we model precisely those liquid assets, and we take the efficient price
to be a Brownian martingale, the driftless core of the semimartingale
the econometrics finds. Inside that frame
we build an order book whose own dynamics produce facts (i)
and (ii), and we derive
the optimal trading band in closed form. The organising idea is a \emph{parity lock}: built from
the bid, the two-valued spread of (iii) is the parity, the
even-or-odd, of the mid on the half-tick grid, so gap and spread
collapse to one continuous coordinate plus a parity bit
(Section~\ref{sec:model}).

On that coordinate the mid is a pure jump
process whose move intensities lean toward the efficient price; under
one balanced-response condition on the flow, the gap's reversion, at the
book's leaning rate $\alpha$, is then a theorem:
its conditional mean and stationary covariance are exactly those of an
Ornstein--Uhlenbeck process. Its paths are not: the mid jumps.
The efficient price, in turn, is the long-run forecast of the mid,
Hasbrouck's permanent component (Section~\ref{sec:model}). The
strategy follows by renewal--reward, with waiting times evaluated on
the Gaussian surrogate those exact moments define, the model's central
approximation. It costs three errors, $\delta$ being the tick: (a) a
reward error of proved relative order $\delta/(\theta-\phi)$; (b) a
frozen tight-book cost, bounded by the occupancy of open books; (c) a
timing error of order $\delta/\theta$, which we leave heuristic. In the
regime $\delta\ll\theta-\phi$ where we work, (a) and (c) are small.

What comes out is a symmetric no-churn band at the optimal
half-width $\theta^{*}$: buy when the gap reaches $-\theta^{*}$, sell at
$+\theta^{*}$, hold inside. That half-width solves
$\theta^{*}(\theta^{*}-\phi)=\sG^{2}$ (threshold times margin equals
the gap's stationary variance) and the band earns rate
$\alpha\sG\sqrt{2/\pi}\,e^{-\theta^{*2}/2\sG^{2}}$. On the surrogate this
band is optimal among all admissible strategies, by results from the
switching literature; on the jump process itself that reduction remains
a conjecture (Section~\ref{sec:band}). The myopic benchmark (trade as
soon as the gap covers the spread) earns exactly zero on the
surrogate: the whole rate is the option value of waiting.

The band
equation is the large-threshold asymptote of an exact first-order
condition in the Dawson function, solved in Appendix~\ref{app:passage},
at whose root the rate formula is exact. Simulation places the best band
about a fifth inside that root; trading at the root costs three to four
percent of rate, at $\theta^{*}$ five to six (Section~\ref{sec:validate}).
Section~\ref{sec:remarks} collects what the solution teaches; proofs
and closed forms are in the appendix.

\section{The model}\label{sec:model}

\subsection{The phenomenon}\label{sec:evidence}

Picture the two prices of the same asset. One is the \emph{efficient
price} $X$: the fundamental value, real but not displayed anywhere.
It moves only on genuine news: on a filtered probability space
$(\Omega,\mathcal F,(\mathcal F_t)_{t\ge0},\mathbb P)$ we take it to be
an exogenous Brownian martingale,
formalised in Definition~\ref{def:X} below. This is not a convenience
but fact (iv) of the introduction, taken at its driftless
core \cite{andersen03,zhangmykland05}; and
Hasbrouck's permanent component is a random walk by
construction \cite{hasbrouck95}. The
other is the \emph{mid} $M$: the average of the best
bid and best ask actually displayed on the screen, observable, and moved
only by the book. Their difference is the \emph{gap}
\[
G_t := M_t - X_t ,
\]
a stationary error pulled back toward zero \cite{roll84,hansenlunde06}. The premise here (a modelling stance, not a law of nature) is the
existence of the latent martingale $X$; that the gap it defines is
stationary and mean-reverting is a finding in the evidence and a
theorem in the model below. Two properties are
worth recording now. First, the mid converges to $X$ \emph{in
conditional expectation}: the long-horizon forecast of the mid is
today's efficient price,
\begin{equation}\label{eq:X}
\lim_{h\to\infty}\E\big[M_{t+h}\mid\mathcal F_t\big]=X_t .
\end{equation}
Read left to right, \eqref{eq:X} is the pull: the forecast of where the
mid is going lands on value. Read right to left, it is an
identification: $X$ is
the permanent component of the observed price in the sense of Hasbrouck
\cite{hasbrouck95} (stated here, proved as a one-line theorem of the
model in Section~\ref{sec:derived}), which is what makes the unseen
$X$ estimable from quotes. Second, nothing converges pathwise: the limit in
\eqref{eq:X} is taken over the forecast horizon $h$, not over calendar
time; what is stable, stationary rather than convergent, is the distance
between the two prices. The phenomenon of this paper
is that pull; the question is what trading it is worth. As
\eqref{eq:wealth} below makes precise, the strategy's profit will not
depend on where $X$ goes, only on when the mid strays from it.

\subsection{The book, from the bid, and the parity lock}

With tick size $\delta$ (the smallest price increment), the displayed bid
$B_t$ lives on the tick grid, and for a large-tick asset the spread $S_t$
takes two values:
\begin{equation}\label{eq:grid}
B_t\in\delta\mathbb Z,\qquad S_t\in\{\delta,2\delta\},\qquad
M_t=B_t+\tfrac{S_t}{2}\in\tfrac{\delta}{2}\mathbb Z .
\end{equation}
This is the regime of liquid, large-tick assets, and it is worth saying
who lives in it. When the queue at the touch is deep, the spread sits at
one tick almost all of the time; a market order that consumes the best
queue, or a cancellation that empties it, opens the spread to two ticks
for a short while, until a new limit order closes it: the opens and
closes of Figure~\ref{fig:chain}, brief by nature
\cite{dayrirosenbaum15}. Wider spreads would require emptying two or
more price levels before any replenishment arrives; in this class the
levels behind the touch are deep and the inside is refilled fast. This
is not a claim that any single refilling order profits (adverse
selection decides that case by case, and the spread opens precisely
when the book has just been hit) but the defining regularity of the
class: the tick is large relative to the spread competition would set,
so the one-tick spread is worth queueing for, priority at the inside is
contested, and spreads of three ticks or more are empirically
negligible \cite{dayrirosenbaum15}. We exclude them from the state
space. Assets that are illiquid, or whose
tick is small relative to their volatility, carry many spread values at
once and fall outside the model. Modelling the bid rather than the mid is not
cosmetic: the mid is a derived
average, and once the spread can open and close, the mid must move in
half-ticks. That grid arithmetic has a structural consequence.

\begin{fact}[The parity lock]\label{fact:parity}
Measure the mid in half-ticks, so that it is an integer; \emph{parity}
means whether that integer is even or odd; equivalently, in ticks, whether
the mid sits at a half-integer or at an integer. A quote slide moves the mid by
two half-ticks, preserving parity; opening or closing the spread moves it
by one, flipping parity; and an open is exactly the event that sets
$S=2\delta$. One direction is pure grid arithmetic:
\begin{equation}\label{eq:parity}
S=\delta\ \text{(tight)}\implies M\in\delta\mathbb Z+\tfrac{\delta}{2},
\qquad
S=2\delta\ \text{(open)}\implies M\in\delta\mathbb Z .
\end{equation}
Within the two-valued class \eqref{eq:grid} the implication inverts (two
spread values, two parities): the spread is the parity of the mid, not a
second state variable. What survives is one continuous coordinate, the
gap, plus a parity bit that flips at half-tick moves
(Figure~\ref{fig:chain}).
\end{fact}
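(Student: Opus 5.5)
The argument is grid arithmetic in half-tick units. Write the bid as $B=\delta b$ with $b\in\mathbb Z$ and the spread as $S=\delta s$ with $s\in\{1,2\}$ by \eqref{eq:grid}. The mid in half-ticks is then the integer
\[
m:=\frac{2M}{\delta}=2b+s .
\]
Since $2b$ is even, $m\equiv s \pmod 2$. This single congruence drives every claim of the Fact, and I would take them in order.

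\emph{The forward implication \eqref{eq:parity}.} If $S=\delta$, then $s=1$ and $m$ is odd. Hence $M=\tfrac{\delta}{2}m\in\delta\mathbb Z+\tfrac{\delta}{2}$. If $S=2\delta$, then $s=2$ and $m$ is even, so $M\in\delta\mathbb Z$.

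\emph{The converse within the class \eqref{eq:grid}.} The map $s\mapsto m \bmod 2$ sends $\{1,2\}$ bijectively onto $\{1,0\}$, because $1$ and $2$ have different parities. So $s$ is determined by the parity of $m$. Explicitly, $S=\delta$ when $m$ is odd and $S=2\delta$ when $m$ is even, and the spread is a function of $M$ alone. This is the only place the exclusion of spreads of three or more ticks is used. With $s=3$ allowed, $s=1$ and $s=3$ would share a parity and the inversion would fail. I would state this explicitly so the dependence on the two-valued class is visible.

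\emph{The dynamic claims.} I classify book events by their effect on $(b,s)$.
\begin{itemize}
\item A quote slide shifts the bid by one tick at unchanged spread: $(b,s)\mapsto(b\pm1,s)$, so $\Delta m=\pm2$ and the parity is preserved.
\item An open takes $s=1\mapsto2$. Either the bid stays put ($\Delta m=+1$, an ask-side depletion) or the bid drops one tick ($\Delta m=-1$, a bid-side depletion).
\item A close takes $s=2\mapsto1$, with $\Delta m=\mp1$ by the same two cases.
\end{itemize}
In every open or close, $|\Delta m|=1$ and the parity flips. Finally, $s=2$ after an event exactly when that event was an open, or when the book was already open and the event was a slide. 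So the open events are precisely the transitions into $S=2\delta$.

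Consequently the state $(b,s)$ is recoverable from $m$ alone, and $G=M-X$ together with $m \bmod 2$ carries all the information in $(M,S,X)$ modulo $X$.

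The main obstacle is only definitional. One must fix that a \enquote{slide} means both quotes moving one tick together and that an open or close moves exactly one quote by one tick. Any event moving both quotes differently would be composed from these in the two-valued class. I would record this as the event taxonomy of Figure~\ref{fig:chain} before running the arithmetic.
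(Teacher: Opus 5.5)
Your proposal is correct and follows the paper's proof essentially step for step. The paper also writes the mid in half-ticks as $k_t=2B_t/\delta+S_t/\delta$, reads its parity off $S_t/\delta$, inverts the map because only $\delta$ and $2\delta$ are available, and checks that slides change the index by $\pm2$ and opens or closes by $\pm1$. The differences are cosmetic: the paper frames the dynamic part as an induction (the correspondence holds at $t=0$ and every move of \eqref{eq:mid} preserves it), whereas you derive $\Delta m$ from the change in $(b,s)$; and your clause about slides in an already-open book is harmless but vacuous in Definition~\ref{def:M}, where open books only close.
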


\begin{figure}[!tb]
\centering
\begin{tikzpicture}[>=Stealth, node distance=20mm,
  lvl/.style={circle,draw,minimum size=12.5mm,inner sep=0pt,font=\scriptsize},
  lbl/.style={font=\scriptsize, text=black!60}]
\node[lvl] (m2) {$M_t{-}\delta$};
\node[lvl, right=of m2] (m1) {$M_t{-}\tfrac{\delta}{2}$};
\node[lvl, right=of m1] (z)  {$M_t$};
\node[lvl, right=of z]  (p1) {$M_t{+}\tfrac{\delta}{2}$};
\node[lvl, right=of p1] (p2) {$M_t{+}\delta$};
\node[lbl, below=1.5mm of m2] {tight};
\node[lbl, below=1.5mm of m1] {open};
\node[lbl, below=1.5mm of z]  {tight};
\node[lbl, below=1.5mm of p1] {open};
\node[lbl, below=1.5mm of p2] {tight};
\draw[->] (m2) to[bend left=18] node[above,font=\scriptsize]{open} (m1);
\draw[->] (m1) to[bend left=18] node[below,font=\scriptsize]{close} (m2);
\draw[->] (m1) to[bend left=18] node[above,font=\scriptsize]{close} (z);
\draw[->] (z)  to[bend left=18] node[below,font=\scriptsize]{open} (m1);
\draw[->] (z)  to[bend left=18] node[above,font=\scriptsize]{open} (p1);
\draw[->] (p1) to[bend left=18] node[below,font=\scriptsize]{close} (z);
\draw[->] (p1) to[bend left=18] node[above,font=\scriptsize]{close} (p2);
\draw[->] (p2) to[bend left=18] node[below,font=\scriptsize]{open} (p1);
\draw[->] (m2) to[bend left=42] node[above,font=\scriptsize]{slide} (z);
\draw[->] (z)  to[bend left=42] node[above,font=\scriptsize]{slide} (p2);
\draw[->] (z)  to[bend left=55] node[below,font=\scriptsize]{slide} (m2);
\draw[->] (p2) to[bend left=55] node[below,font=\scriptsize]{slide} (z);
\end{tikzpicture}
\caption{The parity lock: the values the mid can reach on the half-tick
grid around a tight mid $M_t$. Tight (half-integer) levels
slide a full tick, preserving parity, or open a half-tick, flipping it;
open (integer) levels close only. The spread is the parity of the mid: it is not a second state variable,
and the gap is the only continuous coordinate. The grid continues in both
directions; only the moves reachable in one step from $M_t$ are drawn.}
\label{fig:chain}
\end{figure}

\subsection{The dynamics}
\label{sec:jumpmodel}

\begin{definition}[Efficient price]\label{def:X}
The efficient price is the exogenous Brownian martingale
\begin{equation}\label{eq:dX}
dX_t=\sigma_X\,dZ_t ,
\end{equation}
with $Z$ a standard Brownian motion and $\sigma_X>0$. Nothing in the
book enters its dynamics.
\end{definition}

\begin{definition}[The book]\label{def:M}
The book is driven by six counting processes, the moves of
Figure~\ref{fig:chain}: slides $N^{s\uparrow},N^{s\downarrow}$, opens
$N^{o\uparrow},N^{o\downarrow}$, closes
$N^{c\uparrow},N^{c\downarrow}$, each with its own
$(\mathcal F_t)$-intensity. With $G_{t-}=M_{t-}-X_{t-}$, $x^{\pm}=\max(\pm x,0)$, and the spread
read off the parity of the mid (Fact~\ref{fact:parity}),
\begin{equation}\label{eq:intensity}
\begin{aligned}
\lambda^{s\uparrow}(t)&=\mathbf 1\{S_{t-}=\delta\}
\Big[\mu_s+\tfrac{2\alpha_s}{\delta}\,G_{t-}^{-}\Big],
&\lambda^{s\downarrow}(t)&=\mathbf 1\{S_{t-}=\delta\}
\Big[\mu_s+\tfrac{2\alpha_s}{\delta}\,G_{t-}^{+}\Big],\\[2pt]
\lambda^{o\uparrow}(t)&=\mathbf 1\{S_{t-}=\delta\}
\Big[\mu_o+\tfrac{2\alpha_o}{\delta}\,G_{t-}^{-}\Big],
&\lambda^{o\downarrow}(t)&=\mathbf 1\{S_{t-}=\delta\}
\Big[\mu_o+\tfrac{2\alpha_o}{\delta}\,G_{t-}^{+}\Big],\\[2pt]
\lambda^{c\uparrow}(t)&=\mathbf 1\{S_{t-}=2\delta\}
\Big[\mu_c+\tfrac{2\alpha_c}{\delta}\,G_{t-}^{-}\Big],
&\lambda^{c\downarrow}(t)&=\mathbf 1\{S_{t-}=2\delta\}
\Big[\mu_c+\tfrac{2\alpha_c}{\delta}\,G_{t-}^{+}\Big].
\end{aligned}
\end{equation}
That $\lambda^{m}$ is the $(\mathcal F_t)$-intensity of $N^{m}$ means
\begin{equation}\label{eq:poisson}
N^{m}_t-\int_0^{t}\lambda^{m}(u)\,du \ \text{ is an }(\mathcal F_t)\text{-local
martingale},\qquad m\in\{s\!\uparrow,\dots,c\!\downarrow\},
\end{equation}
and no two of the six processes jump at the same time. The mid is then
the pure jump process
\begin{equation}\label{eq:mid}
M_t=M_0+\delta\big(N^{s\uparrow}_t-N^{s\downarrow}_t\big)
+\tfrac{\delta}{2}\big(N^{o\uparrow}_t-N^{o\downarrow}_t\big)
+\tfrac{\delta}{2}\big(N^{c\uparrow}_t-N^{c\downarrow}_t\big).
\end{equation}
Throughout, the baselines are positive and the correction is live:
$\mu_s,\mu_o,\mu_c>0$ and $\alpha_s,\alpha_o\ge0$, $\alpha_c>0$.
Appendix~\ref{app:prelim}, (P0), records that a nonexplosive process
with these characteristics exists.
\end{definition}

\paragraph{Why this specification.} None of \eqref{eq:dX}--%
\eqref{eq:mid} is invented for the occasion. That the event
intensities of the best quotes depend on the state of the book is the
tradition of Markovian and queue-reactive order-book models: Cont and
de~Larrard \cite{contdelarrard13} move the quotes at state-dependent
rates, and the queue-reactive model of Huang, Lehalle and Rosenbaum
\cite{huanglehalle15} re-centres the book around a reference price;
both are fitted to exchange tick data, not merely posited. That
the moves of a grid-valued price are driven by its distance to a latent
efficient price is the model with uncertainty zones of Robert and
Rosenbaum \cite{robertrosenbaum11}, where a move fires when $X$ crosses a
threshold; and order flow whose intensity is driven by a latent Brownian
efficient price is the Cox-process construction of Delattre, Robert and
Rosenbaum \cite{delattre13}, built there for estimation. Our linear
intensities are the smooth analogue of the first and the control-ready
version of the second: the minimal specification delivering an exact
linear compensator. The same architecture, a book whose event
intensities are driven by a latent Brownian efficient price, has since
been developed into a unified queue-reactive framework by Sfendourakis
\cite{sfendourakis25}, with stability of the mid around the efficient
price and diffusive limits at the large scale; and Pulido, Rosenbaum and
Sfendourakis \cite{pulido26} solve a full-information market-making
problem on the uncertainty-zone member of the family, the maker who
knows the efficient price steering her quoted volumes. What is new here
is not the architecture but what it is made to yield: the parity lock
collapses the state to one coordinate, and the position itself, not
the quotes, is the control, solved to a formula. The crossing of the
mutually exciting models \cite{rabechini19,bacry13a} is carried here by
the gap: a jump up lifts $M$, hence shifts the balance of intensities
against the move: it raises the downward ramp when the gap is positive,
releases the upward ramp when the gap is negative, and does both when
the gap crosses zero. Just as important, a fundamental
that drifts away moves the intensities with it, even while the book
stands still. (A transient Hawkes channel with kernel
$\alpha e^{-\beta(t-s)}$, as in \cite{rabechini19}, can be layered on
top; it adds short-horizon retracement without changing what follows.)
Every intensity is nonnegative by construction. One
\emph{balanced-response} condition equalises the book's corrective
drift across parities,
\begin{equation}\label{eq:balanced}
2\alpha_s+\alpha_o=\alpha_c=:\alpha :
\end{equation}
the book responds to the gap, in displacement, however the correction is
delivered. This is a genuine restriction, not a normalisation: without it
the drift is parity-dependent, $-(2\alpha_s+\alpha_o)G$ in a tight book
and $-\alpha_c G$ in an open one, and the single-rate reversion of
Proposition~\ref{prop:reversion} below would split in two. The condition's bite, however, is
bounded by a quantity the tape reports.

\paragraph{Why one-sided ramps, and not an exponential kernel.} A
natural objection to \eqref{eq:intensity} is its asymmetry: when the
book dislocates, the corrective side intensifies while the opposing side
merely sits at its baseline, rather than being suppressed. The choice is
forced, and it is worth saying by what. (a) The drift of the gap sees
only the \emph{difference} of the two sides, and for the one-sided
ramps that difference is exactly linear everywhere:
$G^{+}-G^{-}=G$ identically, so the conditional mean of
Proposition~\ref{prop:reversion} is exact for all $G$, not merely near
zero. Any pair of kernels with the same difference produces the same
drift; what distinguishes them is the \emph{sum}, the total event
rate, which is a variance channel, not a drift channel. (b)
Suppressing the opposing side linearly drives its intensity negative at
finite dislocation; truncating at zero restores nonnegativity but kinks
the difference, destroying the exact linearity, and forbids wrong-way
moves in a dislocated book, which do occur and which the baseline
$\mu$ of the passive side is there to produce. (c) The level is not
a free convenience: one-sided ramps make the total event rate increase
with $|G|$, the two-sided version holds it constant, and the two are
distinguishable on a tape: activity clusters when the book is
dislocated, which is the queue-reactive regularity
\cite{huanglehalle15} carried to the efficient-price setting in
\cite{sfendourakis25}, and selects the one-sided form. The growing
rate also makes silence informative: a quiet book is evidence of a
small gap, the channel the filtering continuation of
Section~\ref{sec:remarks} uses. (d) An
exponential kernel $\mu e^{\beta G^{\mp}}$ agrees with the ramp to
first order over the range the process actually samples ($|G|$ of a
few $s_G$, itself of the order of the tick), so within that range it
adds one curvature parameter the data cannot identify, while destroying
the exact conditional mean, the moment identity of
Proposition~\ref{prop:reversion}, and the closed form of
Section~\ref{sec:band}. The linear ramp is the first-order theory of every smooth monotone
kernel, and the only member of the family for which the reversion is a
theorem rather than an expansion: a one-sided kernel has an exactly
linear difference only if it is itself linear.

\paragraph{What the condition costs.} Write
$\alpha_{\rm tight}=2\alpha_s+\alpha_o$ for the tight-book rate and let
\begin{equation}\label{eq:popen}
p\;=\;\mathbb P(S=2\delta),\qquad
\frac{p}{1-p}
=\frac{\mu_o+\tfrac{\alpha_o}{\delta}\,\E_\pi\big[\,|G|\;\big|\;S=\delta\,\big]}
      {\mu_c+\tfrac{\alpha_c}{\delta}\,\E_\pi\big[\,|G|\;\big|\;S=2\delta\,\big]}
\end{equation}
be the stationary fraction of time the book is open. The ratio is an identity,
not an expansion: in stationarity the flow of opens out of tight books balances
the flow of closes out of open ones, and each flow is its baseline plus its ramp
averaged over the parity it acts on. Dropping the ramps leaves
$\mu_o/(\mu_o+\mu_c)$, the reading at $G\equiv0$; it is accurate only when
$\alpha_o\E_\pi|G|\ll\delta\mu_o$ and $\alpha_c\E_\pi|G|\ll\delta\mu_c$. The
first binds, because it is $\mu_o$ that the class makes small: once
$\E_\pi|G|$ is of the order of a tick, a ramp with $\alpha_o>0$ is of the size
of its own baseline, and the naive ratio is then out by a factor of several,
in either direction, since the ramp on opens pushes $p$ up and the ramp on
closes pushes it down. The parity
relaxes at rate $2(\mu_o+\mu_c)$ or faster, and for a liquid large-tick asset
closes are fast:
priority at the inside is contested, so an open invites the
next refill. Whether that refill profits is an adverse-selection
question the model does not adjudicate. Thus $\mu_c\gg\mu_o$ stands as the
empirical regularity of the class
\cite{dayrirosenbaum15,contdelarrard13}, and $p$ is small: what has to be
small against $\mu_c$ is the whole numerator of \eqref{eq:popen}, baseline and
ramp together. Averaging the parity-dependent drift over
its occupation gives an effective rate
\begin{equation}\label{eq:aeff}
\alpha_{\rm eff}
=(1-p)\,\alpha_{\rm tight}+p\,\alpha_c
=\alpha_{\rm tight}+p\,(\alpha_c-\alpha_{\rm tight}),
\end{equation}
so relaxing \eqref{eq:balanced} moves the reversion rate by at most
$p\,|\alpha_c-\alpha_{\rm tight}|$. Two readings follow. First,
\eqref{eq:balanced} buys \emph{exactness}: with it,
Proposition~\ref{prop:reversion} holds identically rather than after
averaging, and that is what it is for. Second, it does not carry the
result. For a liquid large-tick asset the book is tight almost always
($p$ of order one per cent or less), so the split the condition prevents is
a correction of that order, and under \eqref{eq:balanced} the fitted
$\alpha$ is properly read as the parity-weighted rate \eqref{eq:aeff},
which is the object the trading problem of Section~\ref{sec:band} needs.
Equation \eqref{eq:aeff} is the standard averaging statement for a linear
drift modulated by a finite-state chain that is fast against the drift it
modulates, here $2(\mu_o+\mu_c)\gg\alpha$, which the same class assumption
delivers; we assert it as a bound on
what \eqref{eq:balanced} excludes, and impose \eqref{eq:balanced} in what
follows.

The virtue of \eqref{eq:popen} is that $p$ is not a free number to be
asserted. It is the fraction of time the displayed spread is two ticks,
which a book feed reports directly, and it is simultaneously an output of
the specification through $\mu_o/\mu_c$, so the model can be checked
against the tape twice over.
Figure~\ref{fig:path} shows the resulting dynamics in motion.

\begin{figure}[!tb]
\centering
\includegraphics[width=.9\textwidth]{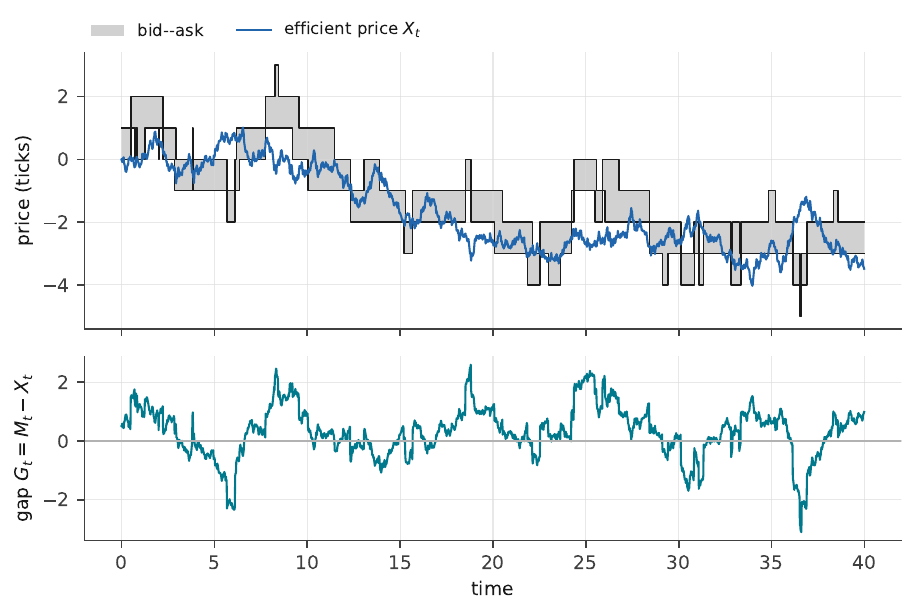}
\caption{The model in motion: a sample path of
Definitions~\ref{def:X}--\ref{def:M} with illustrative parameters;
nothing is calibrated. Top: the bid and ask (the grey ribbon) move on
the tick grid, one tick wide most of the time and two for short episodes,
the spread opening and closing on its own, while the efficient price
threads through it. The baselines here open the book far more often than the
class of Section~\ref{sec:model} does, so that the episodes are visible at all:
this path is open about eighteen per cent of the time, against the one per cent
or less argued above for the class. Bottom: the gap $G_t=M_t-X_t$:
stationary, breathing around zero, the object the rest of the paper
prices.}
\label{fig:path}
\end{figure}

\subsection{Derived properties: reversion, ergodicity, moments}
\label{sec:derived}

The central algebra of the paper is one cancellation. Within each pair
of intensities in Definition~\ref{def:M} the baselines cancel and,
since $G^{-}-G^{+}=-G$ identically, the kinks cancel too: each pair's
net drift is exactly linear in the gap. Weighing each move by its size,
\begin{equation}\label{eq:compensator}
\E[dM_t\mid\mathcal F_{t-}]=
\begin{cases}
\Big(\delta\cdot\tfrac{2\alpha_s}{\delta}
+\tfrac{\delta}{2}\cdot\tfrac{2\alpha_o}{\delta}\Big)
\big(G^{-}_{t-}-G^{+}_{t-}\big)\,dt
=-(2\alpha_s+\alpha_o)\,G_{t-}\,dt, & S_{t-}=\delta,\\[5pt]
\tfrac{\delta}{2}\cdot\tfrac{2\alpha_c}{\delta}\,
\big(G^{-}_{t-}-G^{+}_{t-}\big)\,dt
=-\alpha_c\,G_{t-}\,dt, & S_{t-}=2\delta,
\end{cases}
\end{equation}
and the balanced-response condition \eqref{eq:balanced} makes the two
lines equal: $\E[dM_t\mid\mathcal F_{t-}]=-\alpha\,G_{t-}\,dt$ in either
parity, exactly. The positive parts
exist to keep each intensity nonnegative; they leave no trace in the
drift, and the nonlinearity survives only in the total event rate, where
the stationary mean defining $\sigma_M^{2}$ picks it up. Since $X$ is a
martingale, $\E[dX_t\mid\mathcal F_{t-}]=0$, and the reversion of the gap follows.
We record it together with the stationary structure it forces.

\begin{proposition}[Exact reversion, ergodicity, and moments]\label{prop:reversion}
Under Definitions~\ref{def:X}--\ref{def:M} and the balanced-response condition
\eqref{eq:balanced}, the gap mean-reverts exactly,
\begin{equation}\label{eq:drift}
\boxed{\;\E\big[dG_t\mid\mathcal F_{t-}\big]=-\alpha\,G_{t-}\,dt\;}
\end{equation}
with no expansion and no approximation, and the conditional mean closes at every
horizon, from every state, with no stationarity involved:
\begin{equation}\label{eq:condmean}
\E[G_{t+h}\mid\mathcal F_t]=G_t\,e^{-\alpha h},\qquad t,h\ge0 .
\end{equation}
The pair $(G,S)$ is Markov (the intensities and the diffusion are functions of
the pair alone, and each move updates the spread deterministically by the parity
lock) and it is positive Harris recurrent with a unique invariant law $\pi$ of
finite second moment; under $\pi$,
\begin{equation}\label{eq:moments}
\E_\pi[G]=0,\qquad
\sG^{2}:=\mathrm{Var}_\pi(G)=\frac{\sigma_X^{2}+\sigma_M^{2}}{2\alpha},\qquad
\mathrm{Cov}_\pi\big(G_t,G_{t+h}\big)=\sG^{2}\,e^{-\alpha h},
\end{equation}
where $\sigma_M^{2}$ is the stationary mean of the book's jump-variance rate
$\sum_m\lambda^{m}(t)\Delta_m^{2}$, with $\Delta_m$ the mid displacement of
move $m$: $\pm\delta$ for slides, $\pm\delta/2$ for opens and closes. No
dynamics for $G$ is assumed: every identity here
is a consequence of \eqref{eq:intensity}--\eqref{eq:balanced}.
\end{proposition}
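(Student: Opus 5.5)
The plan is to work with the semimartingale decomposition of $G=M-X$ and read every claim off two compensators, one for the drift and one for the quadratic variation. Write $dM_t=\sum_m\Delta_m\,dN^m_t$ and split each $N^m$ into its compensator $\int\lambda^m\,du$ and the local martingale $\tilde N^m$ of \eqref{eq:poisson}. Then
\[
dG_t=-\alpha G_{t-}\,dt+dL_t,\qquad
L_t:=\sum_m\Delta_m\tilde N^m_t-\sigma_X Z_t .
\]
The drift term is exactly the cancellation \eqref{eq:compensator} combined with \eqref{eq:balanced}, which already gives \eqref{eq:drift}. To obtain \eqref{eq:condmean}, apply It\^o's product rule to $e^{\alpha t}G_t$: this gives $d(e^{\alpha t}G_t)=e^{\alpha t}dL_t$. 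To upgrade $L$ from a local to a true martingale, I need $\E\int_0^T\lambda^m\,du<\infty$, and $\lambda^m\le c(1+|G|)$. A Gr\"onwall bound on $\E G_t^2$ does the job, localized by stopping times and using nonexplosion (P0). Applying Itô to $G^2$ gives
\[
d\E[G^2]\le\big(-2\alpha\E[G^2]+\sigma_X^2+c'(1+\E|G|)\big)dt,
\]
which stays finite on bounded intervals. Conditioning then yields $\E[G_{t+h}\mid\mathcal F_t]=G_te^{-\alpha h}$ from every starting state, with no stationarity used.

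For the Markov property, the six intensities and the diffusion coefficient are functions of $(G_{t-},S_{t-})$ alone. Each jump shifts $G$ by $\Delta_m$ and flips or keeps $S$ by the parity lock of Fact~\ref{fact:parity}. Hence $(G,S)$ solves a martingale problem with generator
\[
\mathcal Af(g,s)=\tfrac12\sigma_X^2\partial_{gg}f+\sum_m\lambda^m(g,s)\big[f(g+\Delta_m,s^m)-f(g,s)\big],
\]
which is well posed by (P0).

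For positive Harris recurrence, I plan a Foster--Lyapunov argument in the style of Meyn--Tweedie with $V(g,s)=g^2$. One computes
\[
\mathcal AV=-2\alpha g^2+\sigma_X^2+\sum_m\lambda^m\Delta_m^2\le-2\alpha g^2+C_1+C_2|g|,
\]
so $\mathcal AV\le -cV+b\mathbf 1_K$ for a compact set $K=[-r,r]\times\{\delta,2\delta\}$. The accompanying condition is that compacts are petite. This follows because the Brownian component gives $G$ an absolutely continuous law on intervals: between jumps $G$ is $M$ minus a Brownian motion. In addition, the baseline rates $\mu_o,\mu_c>0$ make both parities reachable with positive probability in fixed time. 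Together these give a uniform minorization on $K$, which I expect to be the main obstacle to write carefully, since the jump times interleave with the diffusion. The first thing I would try is to condition on the event of no jumps on $[0,1]$, whose probability is bounded below on $K$ because the intensities are bounded there; on that event $G_1$ is $G_0$ plus a Gaussian. I would then follow with one close or open to reach the other parity.

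The drift condition also yields a unique invariant $\pi$ with $\E_\pi V<\infty$. Moments then follow by taking stationary expectations of the generator identities. Using $f=g$ gives $\E_\pi[-\alpha G]=0$. Using $f=g^2$ gives
\[
0=-2\alpha\E_\pi G^2+\sigma_X^2+\sigma_M^2,
\]
where $\sigma_M^2:=\E_\pi\sum_m\lambda^m\Delta_m^2$ is finite because $\lambda^m$ is linear in $|g|$ and $\pi$ has a second moment. Justifying the identity $\E_\pi\mathcal Af=0$ for these unbounded $f$ would use the same localization, plus $\E_\pi G^2<\infty$. Finally, the covariance follows from \eqref{eq:condmean} by the tower property: $\E_\pi[G_tG_{t+h}]=\E_\pi[G_t^2]e^{-\alpha h}$.
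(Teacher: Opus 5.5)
Your proposal is correct and follows essentially the paper's own route. The drift comes from the compensator cancellation, ergodicity from Foster--Lyapunov with $V=G^{2}$ (compact sets petite thanks to the Brownian noise and the positive baselines), the moments from the stationary identities for $f=g$ and $f=g^{2}$, and the autocovariance from the tower property; the only variations are minor, namely closing the conditional mean with the integrating factor $e^{\alpha t}G_t$ instead of the paper's Fubini-plus-$f'=-\alpha f$ argument (both resting on the same Gr\"onwall second-moment bound), and noting that (P0) supplies existence and nonexplosion but not the uniqueness your martingale-problem route to the Markov property would also need.
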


\begin{proof}
The reversion is the one-line cancellation \eqref{eq:compensator} closed by
\eqref{eq:balanced}; ergodicity is a Foster--Lyapunov argument with $V=G^{2}$, the
Brownian component supplying the required irreducibility; and
the moments follow from It\^o's formula on $G^{2}$ together with the linearity of
\eqref{eq:drift}, which closes the conditional mean at every horizon and hence the
autocovariance. Every step is written out in Appendices~\ref{pf:drift}
and~\ref{pf:moments}.
\end{proof}

\begin{remark}[$\sG$ in closed form, to within a bracket]\label{rem:bracket}
The middle identity of \eqref{eq:moments} fixes $\sG$ only implicitly:
$\sigma_M^{2}$ is a stationary mean, and through the ramps it depends on
$\E_\pi|G|$. Two elementary bounds close it. Write
\[
\sigma_{M,0}^{2}:=\delta^{2}\Big[(1-p)\big(2\mu_s+\tfrac{\mu_o}{2}\big)
+p\,\tfrac{\mu_c}{2}\Big],
\qquad
b:=\delta\Big(\alpha-\tfrac{\alpha_o}{2}\Big),
\]
the jump-variance rate the baselines alone produce, and the largest weight a
ramp carries in either parity. Since each ramp contributes a nonnegative
multiple of $|G|$, and no more than $b\,|G|$,
$\sigma_{M,0}^{2}\le\sigma_M^{2}\le\sigma_{M,0}^{2}+b\,\E_\pi|G|$; and
$\E_\pi|G|\le\sG$ by Cauchy--Schwarz. Substituting in \eqref{eq:moments} and
solving the two quadratics in $\sG$,
\[
\sqrt{\frac{\sigma_X^{2}+\sigma_{M,0}^{2}}{2\alpha}}
\;\le\;\sG\;\le\;
\frac{b+\sqrt{b^{2}+8\alpha\big(\sigma_X^{2}+\sigma_{M,0}^{2}\big)}}{4\alpha}\,,
\]
both ends explicit in the primitives once $p$ is read off the tape. The two
ends meet as the ramps' share of the event rate vanishes, the ratio that
also governs \eqref{eq:popen}. For the trading problem none of this is needed:
Section~\ref{sec:band} uses the pair $(\alpha,\sG)$, which the gap's
autocovariance \eqref{eq:moments} identifies directly.
\end{remark}

The forecast property
\eqref{eq:X} promised in Section~\ref{sec:evidence} is now a theorem: by
\eqref{eq:condmean}, $\E[G_{t+h}\mid\mathcal F_t]=G_t\,e^{-\alpha h}\to0$, so
$\lim_{h}\E[M_{t+h}\mid\mathcal F_t]=X_t$: the fundamental is precisely the
permanent component of the observed price. The anatomy respects the causality
throughout: between events $dG=-\sigma_X\,dZ$, driftless, and the pull is realised
entirely at the jumps of $M$, whose
intensities lean toward the fundamental. At an event the gap is kicked by the full
move: $\pm\delta$ for a slide, $\pm\delta/2$ for an open or a close, so a
tight book kicks by either size, an open book only by the smaller one. We treat $G$ as observable,
the full-information case; Section~\ref{sec:remarks} returns to this assumption.

\subsection{The trading problem}

A marketable order pays, relative to the mid, the half-spread
\begin{equation}\label{eq:phi}
\phi(S)=\frac{S}{2}\,,
\end{equation}
which is the only cost in the model, the cost the book itself
imposes. Write
$\phi_t:=\phi(S_t)=S_t/2$ for its value at time $t$: a two-valued
process, $\delta/2$ in a tight book and $\delta$ in an open one. (Any
per-lot fee simply adds to it.) One assumption closes the loop with Definition~\ref{def:M}: the trader
is \emph{small}. Orders fill at the displayed quotes and do not enter the
intensities \eqref{eq:intensity}. Our earlier work models precisely the impact of orders
large enough to violate this \cite{rabechini19}; the provider's response
to being traded against is deferred to Section~\ref{sec:remarks}.
The strategy is the inventory process $q=(q_t)_{t\ge0}$. We call it
\emph{admissible} if
it is adapted with right-continuous paths, takes values in $[-1,+1]$, and
is piecewise constant with a finite expected number of trades on bounded
intervals. The cap is a risk limit, and fractional positions are allowed. An
independent randomisation drawn at time zero is also allowed; it alters nothing
about the book. The trades
react to the state the instant it moves (the execution convention below),
so $q$ itself is not predictable; the integrands in \eqref{eq:wealth}
are the left limits $q_{t-}$, which are. The trader starts flat, and trades are the jumps of $q$: buys and sells
never coincide, so $dq_t>0$ is a buy, $dq_t<0$ a sell, and $|dq_t|$ lots
change hands. Execution is at the touch, a buy paying the ask
$M_t+\phi_t$ and a sell receiving the bid $M_t-\phi_t$, and we track two
markings of the same book: wealth marked to the mid, the desk's
convention, and marked to the efficient price, the analyst's:
\begin{equation}\label{eq:wealth}
\begin{aligned}
dW_t &= q_{t-}\,dM_t-\phi_t\,|dq_t|,\\[2pt]
dW^{X}_t &= q_{t-}\,dX_t-G_t\,dq_t-\phi_t\,|dq_t| .
\end{aligned}
\end{equation}
The first reads as a desk reads it: the position rides the mid, and
every trade pays the half-spread. The second is the same cash flows with
the inventory valued at $X$: per lot, a buy pays the gap plus the
half-spread and a sell collects the gap minus it, and the trade terms
are functions of the state alone. The trade term of $dW^{X}$ carries the \emph{post-jump} gap $G_t$: the trader
sees the move and then lifts the quote it left behind. With that convention the
two markings differ by $W_t-W^{X}_t=q_tG_t$ exactly
(Appendix~\ref{pf:wealth}), a difference bounded in $L^{1}$ uniformly in
$t$, so the long-run objective below is identical on
either; and on $W^{X}$ the inventory term is a zero-mean martingale:
$q_{t-}$ is predictable and bounded, $X$ is a square-integrable martingale, so
$\E\!\int q_{t-}\,dX=0$. Every formula in this paper is a statement about
the mid-marked book, and \emph{the problem is the timing of the gap, not
the forecasting of $X$}.

The objective is the long-run average profit rate, evaluated conservatively by
the lower limit (for an arbitrary strategy the average need not converge; for
the band it does, and the $\liminf$ is attained: Proposition~\ref{prop:rate} and
Appendix~\ref{pf:rate}, Step 6)
\begin{equation}\label{eq:problem}
\sup_{q\ \mathrm{admissible}}\ \liminf_{T\to\infty}\frac{\E[W_T]}{T}.
\end{equation}

One reduction can be dispatched at once: the search over all inventories
in $[-1,+1]$, a far larger and harder optimisation, collapses to the
three values $\{-1,0,+1\}$, with nothing lost. Unlike the reductions of
the next section, this is a pathwise statement about the exact jump
process, with no surrogate anywhere in its proof.

\begin{proposition}[Full positions suffice]\label{prop:bangbang}
For an admissible strategy $q=(q_t)_{t\ge0}$ with values in $[-1,+1]$,
define its \emph{layers}: for each level $u\in(0,1]$,
\[
a^{u}_t \;:=\; \mathbf 1\{q_t\ge u\}\;-\;\mathbf 1\{q_t\le -u\} .
\]
Each layer is an admissible strategy, and, pathwise and at every horizon
$T$,
\begin{equation}\label{eq:layer}
W_T(q)\;=\;\int_0^1 W_T(a^{u})\,du ,
\end{equation}
and consequently
\begin{equation}\label{eq:supeq}
\sup_{\substack{q\ \mathrm{admissible}\\ q_t\in[-1,+1]}}\
\liminf_{T\to\infty}\frac{\E[W_T(q)]}{T}
\;=\;
\sup_{\substack{q\ \mathrm{admissible}\\ q_t\in\{-1,0,+1\}}}\
\liminf_{T\to\infty}\frac{\E[W_T(q)]}{T}\ .
\end{equation}
\end{proposition}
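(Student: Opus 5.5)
The plan is to prove the layer identity \eqref{eq:layer} pathwise from the layer-cake representation of $q$, and then read off \eqref{eq:supeq} by averaging.

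\emph{Layer-cake identities.} For every real $x\in[-1,1]$,
\[
\int_0^1\big(\mathbf 1\{x\ge u\}-\mathbf 1\{x\le -u\}\big)\,du=x^{+}-x^{-}=x .
\]
Applied to $q_t$ and to $q_{t-}$, this gives $\int_0^1 a^u_t\,du=q_t$ and $\int_0^1 a^u_{t-}\,du=q_{t-}$. The second uses that $a^u_{t-}$ is the layer of $q_{t-}$, which holds for all but countably many $u$ and hence for almost every $u$.

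The spread term needs the companion identity for increments: for $x,y\in[-1,1]$,
\[
\int_0^1\big|\,\mathbf 1\{y\ge u\}-\mathbf 1\{y\le-u\}-\mathbf 1\{x\ge u\}+\mathbf 1\{x\le-u\}\,\big|\,du=|y-x| .
\]
To see it, note that the integrand, as a function of $u$, is the indicator that $u$ lies between the positive parts of $x$ and $y$, plus the indicator that $u$ lies between their negative parts. These two sets overlap only in a null set, since $x^{+}x^{-}=0$, and at most one of the two indicators can be nonzero when $x$ and $y$ have opposite signs. This is the main thing to check. I would split into the cases where $x$ and $y$ have the same sign and where they straddle zero. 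In the straddling case, say $x<0<y$, the layer jumps from $-1$ to $+1$ for $u\le\min(|x|,y)$, giving a contribution of $2$. The total is then $2\min+|\,y-|x|\,|$, hmm: more carefully, the integral of $|a^u_y-a^u_x|$ is $2\min(|x|,y)+\big(\max-\min\big)=|x|+y=|y-x|$.

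\emph{Wealth identity.} With these identities, $W_T(q)=\int_0^T q_{t-}\,dM_t-\sum_{t\le T}\phi_t|\Delta q_t|$. Here the sum runs over the jumps of $q$, which are a.s.\ finitely many on $[0,T]$. Each $a^u$ jumps only at jump times of $q$. Substituting $q_{t-}=\int_0^1 a^u_{t-}\,du$ and $|\Delta q_t|=\int_0^1|\Delta a^u_t|\,du$, Fubini/Tonelli applies to the finite sum and to the pathwise Stieltjes integral against the finite-variation pure-jump $M$ (also a finite sum on $[0,T]$). This yields \eqref{eq:layer}.

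\emph{Admissibility of the layers.} Each $a^u$ is adapted, right-continuous, takes values in $\{-1,0,1\}$, and is piecewise constant. By the increment identity, the integral over $u$ of its number of trades is at most the total variation $\sum|\Delta q|$. Finiteness of the expected number of trades for a fixed $u$ is the delicate point, since a tiny oscillation of $q$ across level $u$ counts as a full trade. I would address it as follows. For $u$ outside the countable set of values taken by $q$ at jump times, each trade of $a^u$ is a trade of $q$, so the count is bounded by the number of trades of $q$, which has finite expectation. This covers a.e.\ $u$, which is all the integral needs.

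\emph{Sup equality.} Taking expectations in \eqref{eq:layer} gives $\E W_T(q)=\int_0^1\E W_T(a^u)\,du$, using Fubini with the integrability supplied by bounded positions and finite expected trades. Hence $\E W_T(q)/T\le\sup_u\E W_T(a^u)/T$ for every $T$. To pass to the $\liminf$ I would avoid exchanging it with the $u$-integral, since Fatou goes the wrong way here. Instead, use the randomisation allowed by admissibility: draw $U\sim\mathrm{Unif}(0,1)$ independently at time zero and set $\tilde q=a^{U}$. This is a $\{-1,0,1\}$-valued admissible strategy with $\E W_T(\tilde q)=\E W_T(q)$ for every $T$, so the two $\liminf$s coincide. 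The reverse inequality is trivial, which gives \eqref{eq:supeq}.
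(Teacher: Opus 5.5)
Your proposal is the paper's proof step for step: the layer-cake reconstruction of $q_t$ and $q_{t-}$, the coarea identity $\int_0^1|\Delta a^u_\tau|\,du=|\Delta q_\tau|$ for the proportional cost, the exchange of the $u$-integral with the finite sums over book events and trades, Fubini, and the time-zero draw $a^{U}$, which reproduces $T\mapsto\E[W_T(q)]$ at every horizon.

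Three small repairs are needed.
\begin{itemize}
\item \emph{Admissibility holds for every layer.} As you note yourself in the wealth step, $a^u$ jumps only where $q$ jumps, and this holds for \emph{every} $u$. Because $q$ is piecewise constant, $a^u_{t-}$ is also exactly the layer of $q_{t-}$ for every $u$. So each layer (not merely almost every one) is admissible, as the statement claims, and the ``delicate point'' does not arise.
\item \emph{The overlap remark is backwards.} The two indicator sets overlap on $(0,\min(|x|,y)]$ precisely when $x<0<y$. The absolute value splits into the sum of the two indicators because the positive-part and negative-part differences both carry the sign of $y-x$, not because their supports are disjoint. Your corrected count $2\min+(\max-\min)$ is the right one.
\item \emph{The Fubini step needs (P0).} You need the finite expected number of book events from (P0), so that the gains are dominated by $\delta$ times that number. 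Bounded positions and finite expected trades alone do not give this.
\end{itemize}
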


The proof (Appendix~\ref{pf:bangbang}) is a layer decomposition. The
layers at levels $u\le|q_t|$ hold the full position, with the sign of
$q_t$; the others are flat ($a^u_t=0$); and together they average back to the
position,
$q_t=\int_0^1 a^u_t\,du$. Gains are linear in the position,
so they average exactly across layers; by a coarea identity, so does the
proportional cost; and drawing the level at random at time zero gives a
single three-valued strategy with the same expected wealth at every
horizon, which forces \eqref{eq:supeq}. The reduction rests on linear
reward and proportional cost: under risk aversion or superlinear
trading costs, interior positions can be strictly optimal. From here on
$q$ takes values in $\{-1,0,+1\}$: a conclusion, not a constraint.

\section{The optimal band}\label{sec:band}

Three further reductions precede the choice of class, and they do not
stand on the same footing. \emph{The price level is irrelevant information}: the
pair $(G,S)$ is autonomous (its intensities and diffusion involve the
gap and the parity, never the level of $M$ or $X$), and by
\eqref{eq:wealth} the conditional law of every future wealth increment
given $\mathcal F_t$ depends on the past only through
$(G_t,S_t,q_{t-})$; conditioning a strategy on the level, or on any
other adapted quantity, adds no information about future rewards, so the
control problem lives on the state $(G,S,q)$. \emph{Time}: the model is
time-homogeneous and \eqref{eq:problem} an ergodic average, so the
optimum is sought among stationary state policies, the standard
reduction for ergodic criteria and part of what the verification theory
below delivers on the surrogate. \emph{Symmetry}: Definition~\ref{def:M}
is invariant under $(G,\uparrow)\mapsto(-G,\downarrow)$ and
\eqref{eq:wealth} under exchanging buys with sells, so every policy has
a mirror image with the same rate, and a symmetric optimum exists. We
therefore look for the optimum of \eqref{eq:problem} among symmetric
threshold strategies: for a band half-width $\theta>0$,
\begin{equation}\label{eq:band}
q^{\theta}_t=
\begin{cases}
+1, & G_t\le-\theta \quad\text{(buy: the mid is cheap),}\\
-1, & G_t\ge+\theta \quad\text{(sell: the mid is dear),}\\
q^{\theta}_{t-}, & \text{otherwise (hold, no churn).}
\end{cases}
\end{equation}
Read \eqref{eq:band} as a \emph{target position}, not an order: when
$G_t$ touches $-\theta$ or $+\theta$ the order sent is the difference
$q^{\theta}_t-q^{\theta}_{t-}$, one lot at the first entry and two lots at every fill
thereafter, flipping the position through zero without ever resting
there.

\paragraph{Why thresholds.} The class is not an ansatz. Match each sold
lot to the earliest not-yet-matched bought lot, and each buy-to-cover to
the earliest unmatched short sale: this pairs the executed lots into
\emph{round trips} (one lot bought and one lot sold: a position
opened and later closed, in either order), leaving at most one lot
unpaired, the position still open at $T$. Let $n_T$ be the
number of round trips completed by
time $T$; for the $i$-th, let $\tau^{b}_i$ and $\tau^{s}_i$ be the fill
times of its buy and sell legs, and write $G^{b}_i=G_{\tau^{b}_i}$,
$G^{s}_i=G_{\tau^{s}_i}$ for the gap and
$\phi^{b}_i=\phi_{\tau^{b}_i}$, $\phi^{s}_i=\phi_{\tau^{s}_i}$ for the
half-spread at each leg. Summing the trade bookings of
\eqref{eq:wealth}, the wealth of \emph{every} three-valued strategy,
the class to which Proposition~\ref{prop:bangbang} reduces the problem,
decomposes as
\begin{equation}\label{eq:pairs}
W^{X}_T=\sum_{i=1}^{n_T}
\big(G^{s}_i-G^{b}_i-\phi^{b}_i-\phi^{s}_i\big)
\;+\;\int_0^{T}\! q_{s-}\,dX_s\;+\;O(1),
\end{equation}
term by term, in the order displayed: one summand per completed round
trip, worth \emph{the traverse of the gap minus both half-spreads}
whether the trip is long or short; a zero-mean
martingale, the bounded position integrated against the martingale
$X$; and a remainder, the still-open trip marked to market: at most
one lot times a gap of bounded mean (P2), an $O(1)$ that vanishes
against $T$.

Take expectations in \eqref{eq:pairs} and divide by $T$: the martingale
has mean zero, the remainder vanishes against $T$, and only the trips
survive,
\[
\frac{\E[W_T]}{T}
=\frac{1}{T}\,\E\Big[\sum_{i=1}^{n_T}
\big(G^{s}_i-G^{b}_i-\phi^{b}_i-\phi^{s}_i\big)\Big]+o(1) .
\]
The long-run rate is trips per unit time, times what a trip pays. A
strategy therefore earns only through \emph{where it places its two
legs}, and we look for the optimum among threshold rules: trade when the gap
clears a threshold, one per side; the depth itself is the object
of the optimisation, since a deeper band earns more per trip but waits
longer between trips.

How far can the restriction to the band be trusted? The answer differs
between the paper's two levels, and it should be kept straight. Proposition~\ref{prop:bangbang} is exact and pathwise; its layer
argument reads verbatim on the surrogate, so ``all admissible
strategies'' below includes fractional inventories. The step from all
admissible strategies down to the flip band is different: it is a
theorem only on the Gaussian surrogate introduced below \eqref{eq:ou},
assembled from the literature in three pieces. \emph{Thresholds}: for a
mean-reverting diffusion under proportional costs, optimal entry and
exit are first-passage thresholds, over all admissible strategies,
time-varying and level-dependent ones included; the verification proofs
are \cite{zhangzhang08} for a positive mean-reverting asset,
\cite{zervos13} for general one-dimensional diffusions, and
\cite{leungli16} for the OU asset with a complete analysis. \emph{The criterion}: those theorems discount; the
long-run average of \eqref{eq:problem} is the renewal--reward evaluation
of the trading cycle, as in \cite{bertram10}, with ergodic verification
along the vanishing-discount lines of \cite{abg12}. \emph{Symmetry and no flat
zone}: mirror invariance forces the symmetric band, and
Appendix~\ref{pf:flat} rules out resting flat between the legs. On the exact jump process the same reduction is
a conjecture supported by the same structure (Markov state, ergodic
criterion, monotone trip payoff, mirror symmetry): a verification theorem for the parity-locked jump process is open,
and the simulations of Section~\ref{sec:validate} test the surrogate's
timing within the band class, not the optimality of the class.

The
thresholds can also be taken parity-blind: the refinement
$\theta(S)$ moves the rate by $O(p)$, with $p$ the open-book occupancy
\eqref{eq:popen}, the same order as the cost freeze below.
Appendix~\ref{pf:flat} also records the exact accounting: an excursion
to flat is worth, in expectation, $q(G_a-G_b)-2\phi$ relative to
holding, and an instantaneous round trip pays the full spread for a
traverse of zero. (A risk-averse criterion would reopen a flat zone
around zero; we do not pursue it.)

Hence \eqref{eq:band}, with a single
number left to choose: $\theta$. Proposition~\ref{prop:rate} prices the
band exactly, carrying the half-spread $\phi_F$ realised at each fill;
one convention then lightens the optimisation. Fills in open books are
the exception, the open state's occupancy being the $p$ of \eqref{eq:popen}; from
\eqref{eq:ratesur} on we freeze
the cost at its tight-book value, $\phi:=\delta/2$, returning to what
the freeze drops in Section~\ref{sec:remarks}.

\begin{proposition}[The rate of a band, exactly]\label{prop:rate}
In the model of Definitions~\ref{def:X}--\ref{def:M}, the band
\eqref{eq:band} earns the long-run rate
\begin{equation}\label{eq:rate}
R(\theta)=\frac{2\big(\E_{\pi_F}[\,|G_F|\,]-\E_{\pi_F}[\phi_F]\big)}{m(\theta)}\,,
\end{equation}
where $\pi_F$ is the stationary law of the post-fill chain, $m(\theta)$ the
stationary expected time between opposite fills, and $G_F$ and $\phi_F$ the gap
and the half-spread at a fill, with $\theta\le\E_{\pi_F}|G_F|<\theta+\delta$: the
gap leaves $(-\theta,\theta)$ either by diffusion, exactly at the edge, or
by a jump, past it.
\end{proposition}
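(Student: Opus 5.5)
The proof is a renewal--reward argument on the post-fill chain, combined with the round-trip decomposition \eqref{eq:pairs}.

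\emph{Post-fill chain.} Let $\tau_0<\tau_1<\dots$ be the fill times of the band \eqref{eq:band} after the first entry. They alternate between buys and sells. Define $Y_k:=(G_{\tau_k},S_{\tau_k})$ together with the sign of the position just established. By the strong Markov property of $(G,S)$ (Proposition~\ref{prop:reversion}), $(Y_k)$ is a Markov chain.

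\emph{Mirror reduction.} The mirror map $(G,\uparrow)\mapsto(-G,\downarrow)$ exchanges buy fills with sell fills. It therefore suffices to study the chain folded by $|G|$, or equivalently the two-step chain. I would establish ergodicity of this chain in two steps.
\begin{itemize}
\item[(i)] A minorisation on a petite set. From any post-fill state, the Brownian component drives $G$ across the band to the opposite edge with positive probability before any jump. This yields a common component of the next fill law: an atom at $G=\mp\theta$ under the diffusive exit, together with the parity distribution.
\item[(ii)] A drift condition. The inter-fill time $m$ has finite mean, uniformly in the starting point. The reason is that the gap must traverse $2\theta$ under the positive-recurrent dynamics, and the Foster--Lyapunov function $V=G^2$ from Appendix~\ref{pf:moments} gives finite expected hitting times of $\{|G|\ge\theta\}$ on the opposite side.
\end{itemize}
Together these give a unique stationary law $\pi_F$.

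\emph{Rate identity.} By \eqref{eq:pairs}, on the interval $(\tau_{k-1},\tau_k]$ each fill of two lots books $|G_{\tau_k}|-\phi_{\tau_k}$ per lot, because the sign of the trade is opposite to the sign of the gap. Every fill after the first moves two lots, so each inter-fill cycle contributes reward $2(|G_F|-\phi_F)$, plus a martingale increment of zero mean. The Markov renewal--reward theorem (the ergodic theorem for Markov additive functionals) then gives
\[
\frac{1}{T}\E[W^X_T]\to \frac{2\,\E_{\pi_F}(|G_F|-\phi_F)}{\E_{\pi_F}[\tau_1-\tau_0]}.
\]
The difference $W-W^X=q_TG_T$ is bounded in $L^1$ by (P2), so the same limit holds for $W$. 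The boundary terms, namely the first one-lot entry and the still-open trip, are $O(1)$. Uniform integrability of the cycle rewards follows from the bound on $|G_F|$ below. This also shows that the $\liminf$ in \eqref{eq:problem} is a limit.

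\emph{Bracket on $\E_{\pi_F}|G_F|$.} Before the fill, $|G|<\theta$. The gap leaves the band in one of two ways.
\begin{itemize}
\item By continuous motion: between events, $dG=-\sigma_X dZ$, so the exit is exactly at $\pm\theta$.
\item By a jump of $M$, of size at most $\delta$: then $\theta\le|G_F|<\theta+\delta$.
\end{itemize}
Hence $\theta\le|G_F|<\theta+\delta$ pathwise, and the strict inequality survives averaging because jumps cannot overshoot by $\delta$ or more.

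The main obstacle is ergodicity of the post-fill chain on the continuous state space, in particular establishing the minorisation uniformly over overshoot states and parities. I would first try the approach above: the Brownian exit at the opposite edge serves as a regeneration atom. Since $G_F=\mp\theta$ occurs with probability bounded below, fills at exactly the edge form a recurrent atom, and classical regenerative (Nummelin) arguments then apply directly, avoiding general Harris theory.
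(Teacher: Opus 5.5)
Your architecture is the paper's: a post-fill chain, an atom produced by a diffusive crossing that lands exactly on the edge, a uniform bound on the mean inter-fill time, the regenerative ratio with reward $2(|G_F|-\phi_F)$ per two-lot fill, and the pathwise overshoot bracket. The gap is step (ii). The drift inequality $\mathcal LV\le-\alpha V+d$ for $V=G^{2}$ pushes $G$ \emph{toward} zero. It bounds return times to compact sets. It says nothing about the \emph{outward} passage from the buy edge near $-\theta$ to $\{G\ge\theta\}$. The noiseless flow $\dot G=-\alpha G$ satisfies the same inequality and never gets from $-\theta$ to $\theta$. So ``the gap must traverse $2\theta$ under the positive-recurrent dynamics'' proves nothing. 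The paper's proof flags exactly this: Harris recurrence of $(G,S)$ does not by itself deliver the outward hitting.

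The paper supplies that step directly. Take any state with $|G|\le\theta+\delta$, not only post-fill states. Consider the event ``no book event on $[0,h]$, and the Brownian path stays in a tube inside $\{|G|\le2\theta+2\delta\}$ that carries the gap across the far threshold''. Its probability is at least $p=e^{-\Lambda h}p_{\mathrm{tube}}>0$, where $\Lambda$ bounds the total intensity on that region. Geometric trials through the strong Markov property then give $\E[\tau_{k+1}-\tau_k\mid\mathcal F_{\tau_k}]\le h/p$.

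Your (i) already contains this estimate, but you use it only for the minorisation. Made uniform over a compact region and a fixed time window, it is what carries (ii). The drift condition does have a genuine, complementary role. Between fills the gap may wander deep on the wrong side of the band (below $-\theta-\delta$ after a buy, where the band simply holds). The drift returns it to the compact region, so the trials can restart. Alternatively you could invoke the Meyn--Tweedie regularity theorem. But then it is petiteness (supplied by (P2)), not the drift, that carries the outward step, and you would have to say so.

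The second issue is the atom. A crossing with no book event freezes the parity. So ``fills exactly at the edge'' are two states, $(\theta,\delta)$ and $(\theta,2\delta)$ after your fold, and their kernels differ: slides and opens are active in one, only closes in the other. Their union is therefore not an atom. From an open-book post-fill state, your diffusive minorisation charges only the open-book edge state, so it gives no common component either. The paper fixes this with a variant of the crossing event that contains exactly one parity-flipping event early in the window: an open in a tight book (total rate at least $2\mu_o$), or a close in an open one (at least $2\mu_c$). The tube absorbs the $\delta/2$ kick. A single edge state is then reached with probability bounded below from every post-fill state. With these two repairs, the rest of your argument goes through as written.
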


\begin{proof}
The states of $(G,S,q)$ observed at successive fills form a Markov chain; a
diffusive-crossing construction gives it an accessible atom and a uniformly
finite mean inter-fill time, hence positive Harris recurrence with invariant law
$\pi_F$, and $m(\theta)$ is the mean inter-fill time under $\pi_F$. Each fill
flips the inventory by two lots, each lot earning $|G_F|-\phi_F$, so a cycle
carries reward $4\,\E_{\pi_F}[|G_F|-\phi_F]$ and has mean length $2m(\theta)$;
the regenerative ratio gives \eqref{eq:rate}. The two markings agree because
$W-W^{X}=qG$ is bounded in $L^{1}$, and on $W^{X}$ the inventory term
$\int q_{s-}\,dX_s$ is a zero-mean martingale. The bound on $\E_{\pi_F}|G_F|$ is
the maximal jump, $\delta$. Every step is written out in Appendix~\ref{pf:rate}.
\end{proof}

Everything so far is exact. Here exactness ends: $m(\theta)$
solves a coupled integro-differential system driven by the nonlocal
generator of $(G,S)$, and it has no closed form. We therefore evaluate
the passage times on a surrogate (uniqueness:
Appendix~\ref{pf:surrogate}). Let $\tilde G$ be the unique Gaussian
diffusion with the gap's exact conditional mean and covariance structure
\eqref{eq:condmean}--\eqref{eq:moments},
\begin{equation}\label{eq:ou}
d\tilde G_t=-\alpha\,\tilde G_t\,dt+\sigma\,d\tilde Z_t,
\qquad
\sigma^{2}=\sigma_X^{2}+\sigma_M^{2}=2\alpha\,\sG^{2},
\end{equation}
and write $\tilde m(\theta):=\E_{-\theta}[T_{+\theta}]$, with
$T_y:=\inf\{t\ge0:\tilde G_t=y\}$, for \emph{its} passage
time, in closed form an $\erfi$ (Appendix~\ref{app:passage}). The object we
maximise is the surrogate rate
\begin{equation}\label{eq:ratesur}
\tilde R(\theta)=\frac{2\,(\theta-\phi)}{\tilde m(\theta)} .
\end{equation}
Two distinct symbols are needed: $m(\theta)$ is a
stationary mean over the fill chain of the true book, which begins each leg
\emph{outside} the band, at $\mp|G_F|$, whereas $\tilde m(\theta)$ is the passage
time of the surrogate from $-\theta$ to $+\theta$.
The \emph{construction} of $\tilde G$ assumes nothing: every parameter is an
identity of the model, \eqref{eq:condmean}--\eqref{eq:moments}. Its \emph{use},
however, is an approximation, and the three errors it introduces differ in
status. (a) Replacing $\E_{\pi_F}|G_F|$ by $\theta$ \emph{understates the
reward}, and the overshoot bound $\theta\le\E_{\pi_F}|G_F|<\theta+\delta$ of
Proposition~\ref{prop:rate} makes that error of relative order
$\delta/(\theta-\phi)$, rigorously: the denominator is the margin, not the
threshold, and at the optimum \eqref{eq:arrhenius} it equals $u^{*}\delta/\sG$.
(b) Replacing $\E_{\pi_F}[\phi_F]$ by $\phi=\delta/2$ \emph{overstates the reward},
by at most $\tfrac{\delta}{2}\,\pi_F(\text{open})$ per lot, small by the
occupancy of open books (Section~\ref{sec:remarks}) and pulling against
(a). (c) Replacing $m(\theta)$ by $\tilde m(\theta)$ \emph{replaces the timing},
and here the moments carry us no further: a first-passage time is not a function of
the conditional mean and stationary covariance, so matching those two moments does
not by itself control $\tilde m(\theta)-m(\theta)$. What we invoke instead is the
diffusion approximation of the limit theorems for microstructure point processes
\cite{bacry13b} (the same idealisation by which Avellaneda and Stoikov model the
mid as a Brownian motion \cite{as08}), under which, in the regime where a fill
is preceded by many small jumps, the timing error too is expected to be
of relative order $\delta/\theta$. (c) is a heuristic, not a theorem: a proof
that $|\tilde m(\theta)-m(\theta)|/m(\theta)=O(\delta/\theta)$ is left open, and the
optimal rate \eqref{eq:Rstar} below is exact only for the surrogate. This is where
we work: $\delta\ll\theta-\phi$, which forces $\delta\ll\theta$.

\paragraph{The band equation.} One fact about mean reversion finishes the
problem. For thresholds beyond about one standard deviation, the
surrogate's waiting time obeys Kramers' law \cite{kramers40}: \emph{the time to see the gap at $\pm\theta$ is
inversely proportional to the stationary probability of being there},
\begin{equation}\label{eq:kramers}
\tilde m(\theta)\;\propto\; e^{\theta^{2}/2\sG^{2}}
\end{equation}
up to an algebraic prefactor (the exact formula is in
Appendix~\ref{app:passage}). Keeping only the exponential factor,
maximising \eqref{eq:ratesur} is one line:
$\tfrac{d}{d\theta}\log(\theta-\phi)=\tfrac{d}{d\theta}\log \tilde m(\theta)$
gives $\tfrac{1}{\theta-\phi}=\tfrac{\theta}{\sG^{2}}$, i.e.
\begin{equation}\label{eq:arrhenius}
\boxed{\;\theta^{*}\big(\theta^{*}-\phi\big)=\frac{\sigma^{2}}{2\alpha}=\sG^{2}\;}
\end{equation}
The threshold times the option-value margin equals the stationary variance
of the gap. In units of $\sG$ the solution is
$u^{*}=\theta^{*}/\sG=\tfrac12\big(\gamma+\sqrt{\gamma^{2}+4}\,\big)$,
governed by one dimensionless number: the spread-to-dispersion ratio
$\gamma=\phi/\sG$; explicitly,
$\theta^{*}=\tfrac12\big(\phi+\sqrt{\phi^{2}+4\sG^{2}}\,\big)$. (The exact maximiser
$\theta_{\mathrm D}=\sG u_{\mathrm D}$ solves a
one-line equation in the Dawson function, Appendix~\ref{app:passage}. For
$\gamma\gtrsim0.4$ the two roots differ by at most six percent in threshold
and by $2.3$ percent in rate at worst, near
$\gamma\approx1.7$. Below $\gamma\approx0.28$ the exact optimum leaves
the Kramers regime altogether: $u_{\mathrm D}$ drops inside one standard deviation,
which $u^{*}\ge1$ never does, and only the Dawson root applies. Even at
$\gamma=0.05$, though, the flat maximum
caps the rate lost to \eqref{eq:arrhenius} near nine percent,
Figure~\ref{fig:threshold}. Section~\ref{sec:validate} benchmarks the
exact model at $\theta_{\mathrm D}$, the sharper root.)

\begin{figure}[!tb]
\centering
\includegraphics[width=\textwidth]{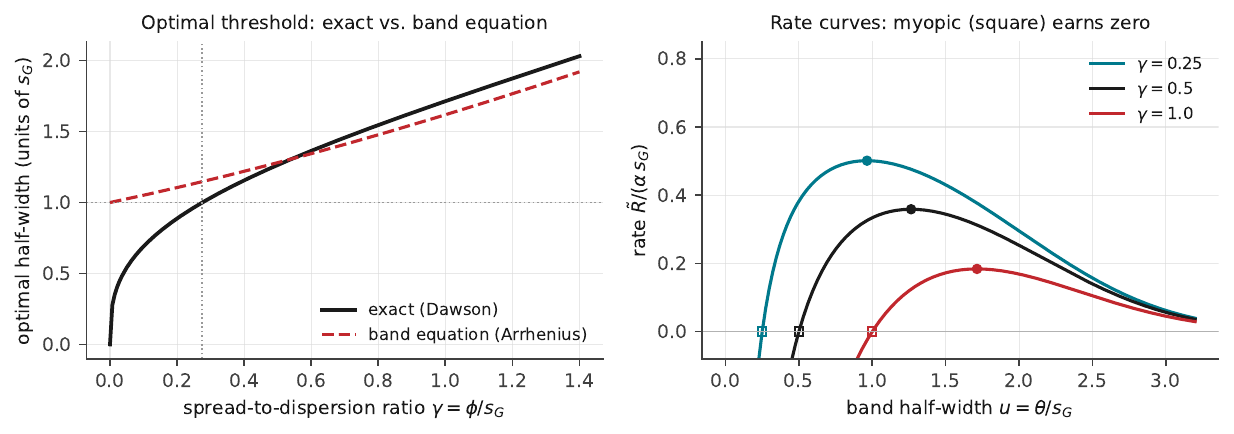}
\caption{Left: the two candidate half-widths against $\gamma$, the
band-equation root $u^{*}$ of
\eqref{eq:arrhenius} and the exact maximiser $u_{\mathrm D}$ of
Appendix~\ref{app:passage}; both are evaluated from the formulas, nothing
is estimated. The curves separate as $\gamma\to0$, where the exact optimum leaves
the Kramers regime $u\gtrsim1$ (dotted lines) and the Dawson root takes over. Right:
surrogate rate curves $\tilde R/(\alpha\sG)$ from \eqref{eq:ratesur};
squares mark the myopic threshold $u=\gamma$, which earns zero; dots mark
the optimum. The curves are flat at the top: a threshold error of
$\varepsilon$ costs only $O(\varepsilon^{2})$, and the simple root loses
$2.3$ percent of rate at worst for $\gamma\gtrsim0.4$.}
\label{fig:threshold}
\end{figure}

\paragraph{The optimal rate.} Substituting the optimum back,
\begin{equation}\label{eq:Rstar}
\boxed{\;R^{*}=\alpha\,\sG\,\sqrt{\tfrac{2}{\pi}}\;e^{-u^{*2}/2}\;}
\end{equation}
That is reversion speed, times gap dispersion, times the Gaussian weight of the
band's edge under the stationary law. (Equation \eqref{eq:Rstar} is exact
at the surrogate optimum: with $u_{\mathrm D}$ in place of $u^{*}$, the equality is
Proposition~\ref{prop:exact}. The rate lost by trading instead at the root $u^{*}$ of
\eqref{eq:arrhenius} is second order in the distance between the two
roots, since the rate is flat at its maximum (Lemma~\ref{lem:flat2}).) Patience is paid from the tails of the gap: hold out for a wider
traverse when the gap disperses more, settle for less when the spread
forces the band deep.

\paragraph{Zero for the impatient.} At $\theta=\phi$ each round trip nets
exactly zero while taking positive time, so $\tilde R(\phi)=0$: the myopic rule
(trade as soon as the gap covers the spread) earns \emph{nothing}.
The optimal band exceeds break-even by
$\theta^{*}-\phi=\sG^{2}/\theta^{*}$, and every unit of profit is the
option value of waiting for a deeper reversion. (In the exact model the
impatient rule keeps only the overshoot income net of the occasional
open-book cost, $\E_{\pi_F}[|G_F|-\phi_F]\in(-\delta/2,\delta)$ per lot:
the zero is the $\delta/\sG\to0$ limit of that income.)

\subsection{The surrogate optimum, tested against the exact model}\label{sec:validate}
One approximation stands between the formula and the model: the surrogate
of \eqref{eq:ou}, whose timing error was left heuristic. We test it where
it matters: at the optimum. Simulating the exact jump model of
Definitions~\ref{def:X}--\ref{def:M} and sweeping the band around the
surrogate optimum $\theta_{\mathrm D}$ of Appendix~\ref{app:passage},
Figure~\ref{fig:validate} reads off the realised rate and
its true maximiser. The sweep is over the band class: it tests where the
surrogate places the optimum and the rate it assigns there, not the
optimality of thresholds themselves, which Section~\ref{sec:band} left
as a conjecture on the exact process.

\begin{figure}[!tb]
\centering
\includegraphics[width=\textwidth]{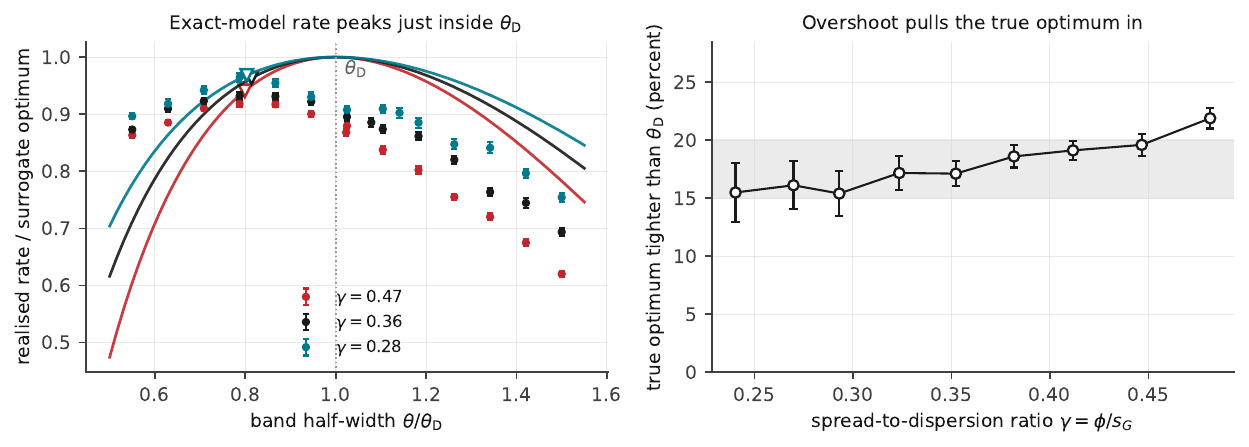}
\caption{The exact jump model of Definitions~\ref{def:X}--\ref{def:M} against
the surrogate optimum $\theta_{\mathrm D}$ of Appendix~\ref{app:passage}
(Monte Carlo; bands
are one standard error). Left: the realised rate against band half-width in
units of $\theta_{\mathrm D}$, normalised to the surrogate optimum, for three
representative $\gamma$; the curve is the surrogate rate, the points the
exact model, the marker ($\blacktriangledown$) its fitted peak. Right:
sweeping $\gamma$ through the ramp slopes of
Definitions~\ref{def:X}--\ref{def:M} with the baseline intensities held
fixed, the
peak sits inside
$\theta_{\mathrm D}$ by around a fifth throughout, edging up as $\gamma$ widens:
the pull of discrete-tick overshoot.}
\label{fig:validate}
\end{figure}

Two things come out, and both were expected. (a) The exact rate tracks the
surrogate over the whole band, and its true maximiser sits a little
\emph{inside} $\theta_{\mathrm D}$: by around a fifth across the swept range
of $\gamma$, edging up as the band widens. A jump carries the
gap past the threshold, so the reward arrives at a width slightly short of
the one the continuous surrogate would pick. (b) The cost of this misplacement
is small precisely because the rate is flat at its top
(Lemma~\ref{lem:flat2}): over the sweep, setting the band at $\theta_{\mathrm D}$ rather
than at the exact peak forfeits three to four percent of rate, and setting it at
the band-equation root $\theta^{*}$, which sits outside $\theta_{\mathrm D}$ at
these $\gamma$, five to six. The heuristic step
above is thus confirmed where it is used: the surrogate
misplaces the optimum by a controlled amount and misprices it by less. In
practice the flatness cuts the other way too: essentially all of the loss is
recovered anywhere in $[0.8,0.9]\,\theta_{\mathrm D}$. A practitioner may nudge the
band slightly inward, but either root is close enough that the choice barely
matters.

\paragraph{The band at work.} Figure~\ref{fig:band} closes the section on
one sample path of the exact model: the fills, the flips, the thresholds
breathing with the spread, and the wealth growing at the optimal rate.

\begin{figure}[!tb]
\centering
\includegraphics[width=\textwidth]{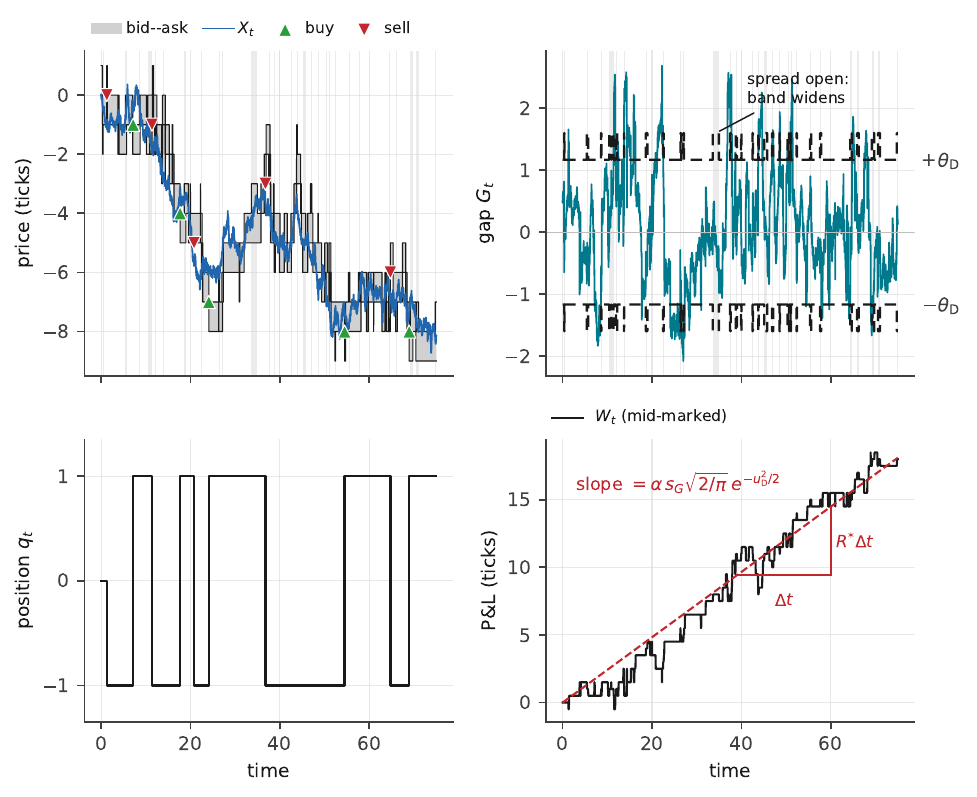}
\caption{The band at work: one sample path of the model of
Definitions~\ref{def:X}--\ref{def:M} (illustrative parameters;
nothing is calibrated), with the threshold set, in each spread state, at
the surrogate optimum $\theta_{\mathrm D}$ of Appendix~\ref{app:passage} for the \emph{realised}
half-spread: $\phi=\delta/2$ while the book is tight, $\phi=\delta$
during the open episodes (shaded), where the band visibly widens and
snaps back. Top to bottom: the bid--ask ribbon and the efficient price,
with buys ($\blacktriangle$) and sells ($\blacktriangledown$) executed
at the touch; the gap and the thresholds $\pm\theta_{\mathrm D}(S_t)$; the
position, one lot at the first entry, $\pm1$ ever after, never flat;
and the mid-marked wealth, whose growth tracks the dashed line of slope
$\tilde R(\theta_{\mathrm D})=\alpha\sG\sqrt{2/\pi}\,e^{-u_{\mathrm D}^{2}/2}$,
the exact optimal rate of the surrogate. The widenings are brief (their
occupancy is the $p$ of \eqref{eq:popen} and its gap-driven correction),
which is why the frozen-cost analysis at $\phi=\delta/2$ loses little.}
\label{fig:band}
\end{figure}

\section{Discussion}\label{sec:remarks}

\paragraph{All profit is option value.} The myopic rule, trade as
soon as the gap covers the spread, breaks even on the surrogate by
construction; whatever a strategy earns, it earns by declining trades
that rule would take.

\paragraph{One number governs everything.} In the gap's own units, $\sG$
for threshold and margin, $\alpha\sG$ for rate, the market enters through
one number: $\gamma=\phi/\sG$, the half-spread over the typical gap. Two
assets with the same $\gamma$ share the same band. When the gap barely exceeds the
spread ($\gamma$ large), the band sits deep and the rate is exponentially
small; when dispersion dominates ($\gamma$ small), the band-equation root
tightens toward one standard deviation, while the exact optimum drops
inside it below $\gamma\approx0.28$ (Section~\ref{sec:band}). Gaps are wide, and the band
with them, when the fundamental is noisy or the book corrects lazily:
$\sG^{2}=(\sigma_X^{2}+\sigma_M^{2})/2\alpha$ exactly, by \eqref{eq:moments},
with $\sigma_M^{2}$ a stationary mean that Remark~\ref{rem:bracket}
brackets.

\paragraph{The optimum is forgiving.} Because the optimum is an interior
maximum of a smooth curve, an error $\varepsilon$ in the threshold costs
only $O(\varepsilon^{2})$ in rate (Figure~\ref{fig:threshold}, right). The
strategy does not require knowing $\alpha$ and $\sG$ precisely, a
practical virtue, since both must be estimated.

\paragraph{Where the cost is paid.} The spread is wide
only while the book is open, and in the model of Definition~\ref{def:M} open books
are transitional by construction: entered by the open moves and left
by the closes; their occupancy is the $p$ of \eqref{eq:popen},
small in liquid large-tick books \cite{dayrirosenbaum15}. The band's fills therefore occur, as a rule, in tight books at
$\phi=\delta/2$, and freezing the cost there, as Section~\ref{sec:band}
does, misprices only the fills that land in open books: exactly
$\tfrac{\delta}{2}\,\pi_F(\text{open})$ per lot, by the two-valued cost
of Proposition~\ref{prop:rate}.
This localisation is also why venue rules that act only on
open books (price-improvement programmes, for instance) are
second-order for this strategy: they change the cost in a state the band
rarely uses.

\paragraph{What we left out.} (a) The gap is not directly observable: $X$
is latent, and a trading version of this paper must filter it from
quotes and trades. The ingredients for that attack are on the shelf.
The model of Section~\ref{sec:jumpmodel} is a state-space system: a
Brownian state observed through a point process whose intensities are
linear in the state. The implied filter is the point-process
one \cite{durbinkoopman12}, and both channels of the observation inform
the latent state. A jump of $M$ reweights the posterior on the gap by
the intensity of the observed move and shifts it by the move's known size. A quiet spell reweights it toward
small gaps, because the total event rate grows with dislocation. The
estimate of $X$ is therefore a function of the movement, and of the
non-movement, of $M$. Kalman recursions apply at the surrogate level,
where the counting observation aggregates to a Gaussian one.
Likelihood estimation for order flow driven by a latent Brownian
price, the structure here, is developed in \cite{delattre13};
and the efficient price is recovered from tick data under uncertainty
zones in \cite{robertrosenbaum11}. Estimation costs performance
relative to the full-information benchmark solved here; how much is a
filtering question we leave open. The parameters are the second
estimation problem: baselines and loadings are identified by standard
point-process likelihood \cite{daleyverejones03}, as in
\cite{rabechini19}, while the reduced pair $(\alpha,\sG)$ needs no
book-level data at all: $\alpha$ is the decay rate of the gap's
autocovariance \eqref{eq:moments}, $\sigma_X$ is delivered by
two-scale realised variance \cite{zhangmykland05}, and Hasbrouck's
decomposition recovers the permanent component from trades and quotes
\cite{hasbrouck95}. (b) Beyond estimation: liquidity providers may respond
strategically to being picked off at the gap's extremes, which turns
the problem into a game. (c) The switching problem can be attacked on
the point process of Section~\ref{sec:jumpmodel} itself, replacing the
surrogate passage times of \eqref{eq:ou} by the exact ones, with
overshoot then a feature, not an error term. (d) The objective
\eqref{eq:problem} is an expectation, not a utility, a modelling
choice. A concave utility, or the Sharpe variant that \cite{bertram10}
solves for the OU asset, would price the variance of the always-full
position, and the optimum changes shape: a flat zone reopens around
zero, interior positions can become optimal
(Proposition~\ref{prop:bangbang}), and the optimal thresholds
themselves move. Each of these is a natural
continuation; all start from the frame built here.

\section{Conclusion}\label{sec:conclusion}

Building the book from the bid shows that the spread merely duplicates
the parity of the mid (the parity lock), so the gap is the only
continuous state variable. In a mutually exciting book the gap's reversion at the book's leaning
rate $\alpha$ is a theorem, and the fundamental is proved to be the
long-run forecast of the mid. With waiting times evaluated on the exactly matched Gaussian
surrogate, the optimal trading of that reversion is a one-dimensional
switching problem with a closed-form answer: a symmetric
no-churn band with
$\theta^{*}(\theta^{*}-\phi)=\sG^{2}$ and rate
$\alpha\sG\sqrt{2/\pi}\,e^{-u^{*2}/2}$. Trading as soon as the gap covers
the spread earns exactly zero. The result is meant in the
spirit of Avellaneda and Stoikov \cite{as08}: a stylised model, an
explicit solution, and formulas simple enough to reason with.

Three assumptions carry the result: (a) the balanced response
\eqref{eq:balanced}, whose bite is bounded by the open-book
occupancy in \eqref{eq:aeff}; (b) the small trader, whose orders leave the
intensities \eqref{eq:intensity} unmoved; and (c) the observability of the
gap. Two questions remain open on the jump process itself: the timing
error of the surrogate, whose order $\delta/\theta$ is left heuristic,
and the optimality of the band class, stated here as a conjecture.
Section~\ref{sec:remarks} sketches the corresponding continuations:
estimation, the provider's game, the switching problem attacked on the
jump process directly, and the risk-adjusted criterion.

\appendix

\section{Proofs}\label{app:proofs}

The results quoted in the text are collected here with complete proofs, in order of
dependency: the parity lock first, since the state $(G,S)$ rests on it, then the
reversion and its stationary consequences, the wealth identity, the rate, and the
optimisation. Section~\ref{app:prelim} lists the facts they use.

\subsection{Preliminaries}\label{app:prelim}

Throughout, $\Delta Y_t:=Y_t-Y_{t-}$ denotes the jump at $t$ of a
right-continuous process $Y$, while the constants $\Delta_m$ are the mid
displacements of Proposition~\ref{prop:reversion} ($\pm\delta$ for
slides, $\pm\delta/2$ for opens and closes), so that
$\Delta M_t=\sum_m\Delta_m\,\Delta N^{m}_t$.

\paragraph{(P0) Existence and nonexplosion.} A process satisfying
Definitions~\ref{def:X}--\ref{def:M} exists and is nonexplosive.
Conditionally on the path of $X$, the intensities \eqref{eq:intensity} are
continuous, locally bounded functions of time between events, so the book is
built by iterating Ogata thinning from one event to the next
\cite{daleyverejones03}. Explosion is excluded by the drift bound proved in
\ref{pf:moments}: with $\tau_n$ the $n$-th event and $V=G^{2}$, the bound
$\mathcal LV\le-\alpha V+d\le d$ gives, by Dynkin's formula with the standard
double localisation and Fatou, $\E[V(G_{t\wedge\tau_n})]\le V(G_0)+d\,t$
uniformly in $n$; since the total intensity is affine,
$\sum_m\lambda^{m}\le\tilde c_0+\tilde c_1|G|$, it follows that
\[
\E\Big[\sum_m N^{m}_{t\wedge\tau_n}\Big]
=\E\!\int_0^{t\wedge\tau_n}\!\sum_m\lambda^{m}(u)\,du
\;\le\;\big(\tilde c_0+\tilde c_1\sqrt{V(G_0)+d\,t}\,\big)\,t\;<\;\infty
\]
uniformly in $n$: the expected number of events up to any fixed time is
finite, and $\tau_n\to\infty$ almost surely.

\paragraph{(P1) It\^o for a jump process minus a Brownian motion.} Let $G=M-X$
with $M$ of finite activity and pure jump and $X$ continuous. Then
$[G]_t=\sigma_X^{2}t+\sum_{u\le t}(\Delta M_u)^{2}$: $X$ contributes its quadratic
variation, $M$ its jumps, and the cross term vanishes because $X$ is continuous.
Consequently
\[
dG^{2}_t=2G_{t-}\,dG_t+\sigma_X^{2}\,dt+\sum_m\Delta_m^{2}\,dN^{m}_t .
\]

\paragraph{(P2) Foster--Lyapunov.} Let a Markov process have extended generator
$\mathcal L$, and suppose there exist $V\ge0$ with compact sublevel sets and
constants $c,d>0$ such that $\mathcal LV\le-cV+d$, \emph{and} that all compact
sets are petite. Then the process is positive Harris recurrent, has a unique
invariant law $\pi$, and $\pi(V)\le d/c$ \cite{meyntweedie93}. For $(G,S)$ the
petiteness is supplied by the noise the model already carries: $\sigma_X>0$
gives the gap a nondegenerate Brownian component between events, and the
positive baselines let the parity flip anywhere: from any initial state,
over a unit window, the event ``exactly $k\in\{0,1\}$ book events, with the
Brownian increment in a prescribed interval'' has probability bounded below,
locally uniformly in the initial state, and lands $(G,S)$ in any given open
set with positive probability. The process is therefore $\psi$-irreducible and
a $T$-process in the sense of \cite{meyntweedie93}, for which all compact sets
are petite.

\paragraph{(P3) Regenerative ratio (renewal--reward).} If visits to an
accessible atom split the path into i.i.d.\ cycles of finite mean length,
with rewards of bounded size and finite expected number per cycle, then
$T^{-1}\,\E[\sum_{\tau_k\le T}\mathcal R_k]\to
\E[\text{reward per cycle}]/\E[\text{cycle length}]$, almost surely and in
the mean \cite[Ch.~VI]{asmussen03}; equivalently, the ratio of stationary
means over the embedded chain.

\subsection{Proof of Fact~\ref{fact:parity} (the parity lock)}\label{pf:parity}

\begin{proof}
Measure the mid in half-ticks: $k_t:=2M_t/\delta=2B_t/\delta+S_t/\delta$. The
bid lives on the tick grid, so $2B_t/\delta$ is an even integer and the parity
of $k_t$ is that of $S_t/\delta$, for any spread on the grid: an odd
spread makes $k_t$ odd, an even one makes it even, which is
\eqref{eq:parity}. Within \eqref{eq:grid} only $\delta$ and $2\delta$
are available, so the map inverts: $S_t=\delta$ iff $k_t$ is odd and
$S_t=2\delta$ iff $k_t$ is even. The
correspondence holds at $t=0$ by construction and is preserved by every move of
\eqref{eq:mid}: a slide changes $k$ by $\pm2$ and leaves the spread; an open
changes $k$ by $\pm1$ and takes the spread from $\delta$ to $2\delta$; a close
is its mirror image. In each of the six moves the parity of $k$ and the value
of $S$ change together, so the lock holds for all $t$: the spread is a
deterministic function of the mid, and $(G,S)$ is a Markov process in the
single continuous coordinate $G$ together with a parity bit. \emph{No dynamics
for $S$ needs to be specified: it is read off $M$.}
\end{proof}

\subsection{Proof of Proposition~\ref{prop:reversion}: the reversion
identity}\label{pf:drift}

\begin{proof}[Proof of \eqref{eq:drift}]
\emph{Step 1 (the compensator of the mid).} By \eqref{eq:mid} and the
compensator property of \eqref{eq:poisson} (no two of the six counting
processes jump together, so the compensator of a weighted sum of them is the
weighted sum of their intensities \cite{daleyverejones03}),
$\E[dM_t\mid\mathcal F_{t-}]$ is the sum of the six intensities weighted by their
moves: $\pm\delta$ for slides, $\pm\delta/2$ for opens and for closes.

\emph{Step 2 (a tight book).} On $\{S_{t-}=\delta\}$ only slides and opens have
positive intensity, and their weighted difference is
\[
\delta\Big[\mu_s+\tfrac{2\alpha_s}{\delta}G^{-}\Big]
-\delta\Big[\mu_s+\tfrac{2\alpha_s}{\delta}G^{+}\Big]
+\tfrac{\delta}{2}\Big[\mu_o+\tfrac{2\alpha_o}{\delta}G^{-}\Big]
-\tfrac{\delta}{2}\Big[\mu_o+\tfrac{2\alpha_o}{\delta}G^{+}\Big].
\]
The baselines cancel pairwise, and what remains is
$(2\alpha_s+\alpha_o)(G^{-}-G^{+})$; since $G^{+}-G^{-}=G$ identically
(the positive parts are there only to keep each intensity nonnegative, and they
cancel in every up-minus-down difference), this equals
$-(2\alpha_s+\alpha_o)\,G_{t-}$.

\emph{Step 3 (an open book).} On $\{S_{t-}=2\delta\}$ only closes are active,
and the same computation gives
$\tfrac{\delta}{2}\cdot\tfrac{2\alpha_c}{\delta}\,(G^{-}-G^{+})
=-\alpha_c\,G_{t-}$.

\emph{Step 4 (the balanced response).} Condition \eqref{eq:balanced} reads
$2\alpha_s+\alpha_o=\alpha_c=:\alpha$, so Steps 2 and 3 agree:
$\E[dM_t\mid\mathcal F_{t-}]=-\alpha\,G_{t-}\,dt$ in either parity, with no
expansion and no approximation.

\emph{Step 5 (the gap).} $X$ is a martingale by Definition~\ref{def:X}, so
$\E[dX_t\mid\mathcal F_{t-}]=0$ and
$\E[dG_t\mid\mathcal F_{t-}]=-\alpha G_{t-}\,dt$, which
is \eqref{eq:drift}. No dynamics for $G$ was assumed: the reversion is a
consequence of \eqref{eq:intensity} and \eqref{eq:balanced}.
\end{proof}

\subsection{Proof of Proposition~\ref{prop:reversion}: stationarity and the
moments}\label{pf:moments}

\begin{proof}[Proof of ergodicity, \eqref{eq:condmean}, and \eqref{eq:moments}]
\emph{Step 1 (ergodicity).} Take $V(G,S)=G^{2}$. By (P1) and Step 5 of
\ref{pf:drift}, the generator of $(G,S)$ satisfies
$\mathcal LV=-2\alpha G^{2}+\sigma_X^{2}+\sum_m\lambda^{m}\Delta_m^{2}$. Each
intensity in \eqref{eq:intensity} is affine in $|G|$ and each $|\Delta_m|\le\delta$,
so $\sum_m\lambda^{m}\Delta_m^{2}\le c_0+c_1|G|$ with $c_0,c_1$ depending only on
the parameters. Using $c_1|G|\le\alpha G^{2}+c_1^{2}/4\alpha$,
\[
\mathcal LV\;\le\;-\alpha V+\Big(c_0+\sigma_X^{2}+\tfrac{c_1^{2}}{4\alpha}\Big),
\]
which is (P2) with $c=\alpha$; compact sets are petite by the noise argument
recorded there. Hence $(G,S)$ is positive Harris recurrent with a
unique invariant law $\pi$ and $\E_\pi[G^{2}]<\infty$. Every ``stationary''
quantity below refers to $\pi$.

\emph{Step 2 (the mean is zero).} Taking $\E_\pi$ in \eqref{eq:drift} gives
$0=-\alpha\,\E_\pi[G]$, so $\E_\pi[G]=0$, the first identity of
\eqref{eq:moments}; it also follows from mirror symmetry alone.

\emph{Step 3 (the variance).} By (P1) and stationarity (all
local-martingale terms are true martingales here, their integrands bounded in
$L^{1}$ by $\E_\pi[G^{2}]<\infty$),
\[
0=\E_\pi\big[dG^{2}\big]/dt
=-2\alpha\,\E_\pi[G^{2}]+\sigma_X^{2}
+\E_\pi\Big[\sum_m\lambda^{m}(t)\,\Delta_m^{2}\Big] .
\]
Writing $\sigma_M^{2}$ for the last expectation, the stationary mean of the
book's jump-variance rate, gives
$\sG^{2}=\E_\pi[G^{2}]=(\sigma_X^{2}+\sigma_M^{2})/2\alpha$, the middle identity
of \eqref{eq:moments}. This is the one place where the nonlinearity of
\eqref{eq:intensity} survives: it enters $\sigma_M^{2}$ through $\E_\pi|G|$, and
nowhere else.

\emph{Step 4 (the conditional mean at every horizon).} By the drift bound of
Step 1 and Gronwall,
$\E[G_{t+r}^{2}\mid\mathcal F_t]\le G_t^{2}+d/\alpha$ for all $r\ge0$, so the
compensated integrals below are true martingales and Fubini applies. Put
$f(h):=\E[G_{t+h}\mid\mathcal F_t]$. Integrating \eqref{eq:drift} and applying
Fubini, $f(h)=G_t-\alpha\int_0^{h}f(r)\,dr$; hence $f'=-\alpha f$ with $f(0)=G_t$,
so $f(h)=G_t\,e^{-\alpha h}$: this is \eqref{eq:condmean}. Only the linear
compensator is used, not the Markov
property and no stationarity: the identity holds from every state.

\emph{Step 5 (the autocovariance).} By \eqref{eq:condmean} and the tower property,
$\mathrm{Cov}_\pi(G_t,G_{t+h})
=\E_\pi\big[G_t\,\E[G_{t+h}\mid\mathcal F_t]\big]
=e^{-\alpha h}\,\E_\pi[G_t^{2}]=\sG^{2}e^{-\alpha h}$, which completes
\eqref{eq:moments}.
\end{proof}

\subsection{A lemma on the two markings}\label{pf:wealth}

\begin{lemma}\label{lem:markings}
Under \eqref{eq:wealth}, $W_t-W^{X}_t=q_tG_t$ for all $t$; in particular
$\sup_t\E\,|W_t-W^{X}_t|<\infty$, and the two markings define the same
long-run rate.
\end{lemma}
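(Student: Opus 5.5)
We will prove the identity pathwise, by comparing the two differentials in \eqref{eq:wealth} and showing that their difference is exactly $d(q_tG_t)$. Both processes start at zero and the trader starts flat, so $q_0G_0=0$, and the identity then follows on integrating from $0$ to $t$.

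\emph{Step 1 (the difference of the differentials).} Subtracting the two lines of \eqref{eq:wealth}, the cost terms $\phi_t|dq_t|$ cancel. Since $G=M-X$, the remainder is
\[
d(W-W^{X})_t=q_{t-}\,dG_t+G_t\,dq_t .
\]

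\emph{Step 2 (the product rule).} Now compute $d(qG)$. The process $q$ is piecewise constant with finitely many jumps on bounded intervals, hence of finite variation and pure jump. For such a process times a semimartingale, the product rule reads
\[
d(q_tG_t)=q_{t-}\,dG_t+G_{t-}\,dq_t+\Delta q_t\,\Delta G_t .
\]
The jumps of $q$ occur at trade times, where the trader reacts to a move of the state. Because $G_{t-}+\Delta G_t=G_t$, the last two terms combine to $G_t\,\Delta q_t=G_t\,dq_t$. This is precisely where the post-jump convention in $dW^{X}$ is needed: with $G_{t-}$ in the trade term, the cross term $\Delta q\,\Delta G$ would survive. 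Hence $d(W-W^{X})=d(qG)$, and so $W_t-W^{X}_t=q_tG_t$ for all $t$.

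The only delicate point is the bookkeeping at trade times. One must check that the trade term in $W^{X}$ uses $G_t$, which it does by stipulation, and that simultaneous jumps of $q$ and $G$ are covered by the formula. They are, because the formula is an exact algebraic identity at each jump: $q_tG_t-q_{t-}G_{t-}=q_{t-}\Delta G_t+G_t\Delta q_t$. Between jumps of $q$, the identity reduces to $d(qG)=q\,dG$. As a check, the purely pathwise computation $\sum$ over jumps plus continuous parts confirms the result without any stochastic calculus beyond finite variation of $q$.

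\emph{Step 3 (the uniform bound and the rate).} Since $|q_t|\le1$, we have $\E|W_t-W^{X}_t|\le\E|G_t|\le(\E G_t^{2})^{1/2}$. By the Gronwall bound in Step~4 of \ref{pf:moments}, $\E G_t^{2}\le\E G_0^{2}+d/\alpha$, which is uniform in $t$. Dividing by $T$, the difference $\E[W_T-W^{X}_T]/T\to0$, so the two $\liminf$ rates in \eqref{eq:problem} coincide.
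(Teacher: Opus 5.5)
Your proof is correct and is essentially the paper's own argument: you subtract the two lines of \eqref{eq:wealth} to get $q_{t-}\,dG_t+G_t\,dq_t$, identify it with $d(q_tG_t)$ by integration by parts via $G_{t-}\Delta q_t+\Delta q_t\,\Delta G_t=G_t\,\Delta q_t$ (exactly where the post-jump execution convention enters), and bound $\E|q_tG_t|\le\sup_t\E|G_t|$ through the Gronwall drift bound of Step~4 in Appendix~\ref{pf:moments}. One cosmetic point: ``starting flat'' is best read at $0-$ ($q_{0-}=0$, $W_{0-}=W^{X}_{0-}=0$), because a fill at $t=0$ gives $q_0G_0\neq0$, and the identity $W_0-W^{X}_0=q_0G_0$ still holds in that case.
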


\begin{proof}
\emph{Step 1 (the difference is an exact differential).} By \eqref{eq:wealth},
\[
dW_t-dW^{X}_t=q_{t-}\big(dM_t-dX_t\big)+G_t\,dq_t
=q_{t-}\,dG_t+G_t\,dq_t .
\]
Since $q$ is piecewise constant and of finite variation while $G$ is a
semimartingale, integration by parts gives
$d(qG)_t=q_{t-}dG_t+G_{t-}dq_t+d[q,G]_t$ with
$[q,G]_t=\sum_{s\le t}\Delta q_s\,\Delta G_s$,
and $G_{t-}\Delta q_t+\Delta q_t\Delta G_t=G_t\,\Delta q_t$. Hence
$dW_t-dW^{X}_t=d(q_tG_t)$ and, starting flat, $W_t-W^{X}_t=q_tG_t$.

The identity requires the trade term in $dW^{X}$ to carry the
\emph{post-jump} gap $G_t\,dq_t$, not $G_{t-}\,dq_t$, the execution
convention of the model; with $G_{t-}$ the identity would carry the extra term
$\Delta q\,\Delta G$.

\emph{Step 2 (boundedness).} $|q|\le1$ and $\sup_t\E|G_t|<\infty$ by the drift
bound in Step 4 of \ref{pf:moments}, so $\E|W_t-W^{X}_t|$ is bounded uniformly
in $t$ and the two markings define the same long-run rate.
\end{proof}

\subsection{Proof of Proposition~\ref{prop:bangbang}}\label{pf:bangbang}

\begin{proof}
\emph{Step 1 (the layers are admissible and reconstruct the position).}
Each $a^{u}$, as defined in the statement, is a deterministic function of
$q_t$, hence adapted; it is
right-continuous and piecewise constant because $q$ is; it trades only when
$q$ trades, with $|\Delta a^{u}|\le2$ per trade, so its expected number of
trades on bounded intervals is finite and $a^{u}$ is admissible. The layers
reconstruct the position exactly: for every $t$,
\[
\int_0^1 a^{u}_t\,du
\;=\;\int_0^1 \mathbf 1\{q_t\ge u\}\,du-\int_0^1\mathbf 1\{q_t\le -u\}\,du
\;=\;q_t^{+}-q_t^{-}\;=\;q_t .
\]

\emph{Step 2 (gains are linear, so they average exactly).} By nonexplosion
(P0) the mid has finitely many jumps on $[0,T]$, so the gains integral is
the finite sum $\int_0^T q_{t-}\,dM_t=\sum_i q_{\tau_i-}\,\Delta
M_{\tau_i}$ over the jump times of $M$. Substituting the layer identity of
Step 1 at each $\tau_i-$ and exchanging the finite sum with the integral in
$u$,
\[
\int_0^T q_{t-}\,dM_t \;=\;\int_0^1\Big(\int_0^T a^{u}_{t-}\,dM_t\Big)du .
\]

\emph{Step 3 (proportional costs obey an exact coarea identity).} The map
$x\mapsto\mathbf 1\{x\ge u\}-\mathbf 1\{x\le-u\}$ is nondecreasing with
unit jumps at $x=-u$ and $x=u$. Hence, for a trade of $q$ at time $\tau$
from $a:=q_{\tau-}$ to $b:=q_{\tau}$,
\[
|\Delta a^{u}_\tau|
=\mathbf 1\{u\in(a\wedge b,\,a\vee b]\}
+\mathbf 1\{-u\in[a\wedge b,\,a\vee b)\},
\]
and integrating in $u$ the two indicator sets tile the positive and
negative parts of the crossed interval:
$\int_0^1|\Delta a^{u}_\tau|\,du=|b-a|=|\Delta q_\tau|$. (For instance, a
trade from $+0.3$ to $-0.6$ flips the layers $u\le0.3$ from $+1$ to $-1$
and sells the layers $0.3<u\le0.6$ from $0$ to $-1$:
$2\times0.3+1\times0.3=0.9$.) The half-spread $\phi_\tau$ at the trade
time is common to all layers, so
$\phi_\tau|\Delta q_\tau|=\int_0^1\phi_\tau|\Delta a^{u}_\tau|\,du$, and
summing over the finitely many trades,
\[
\int_0^T \phi_t\,|dq_t| \;=\;\int_0^1\Big(\int_0^T\phi_t\,|da^{u}_t|\Big)du .
\]

\emph{Step 4 (the wealth identity and the pre-limit bound).} Subtracting
Step 3 from Step 2 gives \eqref{eq:layer} pathwise. The family is jointly
measurable, and integrable uniformly in $u$: the gains are dominated by
$\delta$ times the number of book events, of finite mean by (P0), and the
costs by $\delta$ times twice the number of trades of $q$, of finite mean
by admissibility. Fubini then gives
$\E[W_T(q)]=\int_0^1\E[W_T(a^{u})]\,du\le\sup_{u}\E[W_T(a^{u})]$ for
every $T$: an average cannot beat its best layer.

\emph{Step 5 (the equality of values).} One inequality in \eqref{eq:supeq}
is inclusion: a three-valued strategy is an admissible interval-valued one.
For the other, enlarge the filtration by a draw $U$, uniform on $(0,1]$,
taken at time zero independently of the book (the independence leaves
every intensity and every martingale statement unchanged, and
admissibility is understood throughout to permit such a draw), and set
$\hat a:=a^{U}$, a single admissible strategy with values in
$\{-1,0,+1\}$. By independence, Fubini
and the integrability of Step 4,
\[
\E[W_T(\hat a)]\;=\;\int_0^1\E[W_T(a^{u})]\,du\;=\;\E[W_T(q)]
\qquad\text{for every }T ,
\]
so $\hat a$ and $q$ share the whole function $T\mapsto\E[W_T]$, hence the
same lower-limit rate, and \eqref{eq:supeq} follows.
\end{proof}

\begin{remark}\label{rem:norand}
Randomisation is a convenience, not a loophole. On the market filtration
alone, the moment bound of Appendix~\ref{pf:moments} gives
$K:=\sup_{t\ge0}\E|G_t|<\infty$ and, by \eqref{eq:drift},
$T^{-1}\E[W_T(a^{u})]\le\alpha K$ uniformly in $T\ge1$ and $u$; reverse
Fatou applied to \eqref{eq:layer} then yields
$\limsup_T T^{-1}\E[W_T(q)]\le\sup_u\limsup_T T^{-1}\E[W_T(a^{u})]$:
no interval-valued strategy beats the three-valued value in the upper-limit
sense either, and the optimal band attains its rate as a true limit
(Appendix~\ref{pf:rate}).
\end{remark}

\subsection{Proof of Proposition~\ref{prop:rate}}\label{pf:rate}

\begin{proof}
\emph{Step 1 (the two markings agree).} By \eqref{eq:wealth} and
Appendix~\ref{pf:wealth}, $W_t-W^{X}_t=q_tG_t$. Since $|q|\le1$ and
$\sup_t\E|G_t|<\infty$ (Step 4 of \ref{pf:moments} bounds
$\E[G_t^{2}]$ from any start), the difference is bounded in $L^{1}$
uniformly in $t$, so $\E[W_T]/T$ and $\E[W^{X}_T]/T$ have the same limit.

\emph{Step 2 (the inventory term vanishes).} $q_{t-}$ is predictable and
bounded and $X$ is a square-integrable martingale, so
$\int_0^{T}q_{s-}\,dX_s$ is a martingale of zero mean. Hence
$\E[W^{X}_T]=\E\big[-\int_0^{T}G_t\,dq_t-\int_0^{T}\phi_t\,|dq_t|\big]$: the rate
depends on the strategy only through the state at its trades.

\emph{Step 3 (the mean time between fills is finite, uniformly).} Under the
band \eqref{eq:band} the inventory is $\pm1$ after the first fill, and the fill
times $\tau_1<\tau_2<\cdots$ are the successive times the gap reaches the far
threshold; the post-fill gap satisfies $|G_{\tau_k}|\in[\theta,\theta+\delta)$
by Step 7 below. Positive recurrence of the fill chain requires the mean time
to \emph{leave} a bounded region, an outward hitting, which Harris
recurrence of $(G,S)$ does not by itself deliver. It does here. Fix $h>0$ and
let $\Lambda$ bound the total intensity $\sum_m\lambda^{m}$ on
$\{|G|\le2\theta+2\delta\}$, finite because each intensity is affine in $|G|$.
From any state with $|G|\le\theta+\delta$, consider the event $A$: no book
event on $[0,h]$, and the Brownian path stays in a tube that carries the gap
monotonically enough across the far threshold within $[0,h]$ while never
leaving $\{|G|\le2\theta+2\delta\}$. Conditionally on any such Brownian path,
the no-event probability is at least $e^{-\Lambda h}$, because until the first
event the gap moves by the Brownian alone and the tube confines it where the
intensities are bounded by $\Lambda$; and the tube itself has probability
bounded below, uniformly over the compact set of starting points, by the
standard tube estimate for Brownian motion. Hence
$\mathbb P(\text{fill within }h)\ge e^{-\Lambda h}\,p_{\mathrm{tube}}=:p>0$
uniformly, and by the strong Markov property
$\mathbb P(\tau_{k+1}-\tau_k>jh\mid\mathcal F_{\tau_k})\le(1-p)^{j}$, so
$\E[\tau_{k+1}-\tau_k\mid\mathcal F_{\tau_k}]\le h/p$ uniformly over post-fill
states.

\emph{Step 4 (an accessible atom, hence positive Harris).} By the strong
Markov property of $(G,S)$, the post-fill states
$(G_{\tau_k},S_{\tau_k},q_{\tau_k})$ form a Markov chain on the compact set
$\{\theta\le|g|<\theta+\delta\}\times\{\delta,2\delta\}\times\{\pm1\}$. On the
event $A$ of Step 3 the crossing is diffusive (no book event fires), so
the arrival gap is \emph{exactly} $\pm\theta$ and the parity arrives frozen at
its post-fill value; a variant of $A$ with exactly one parity-flipping event
early in the window (an open if the book is tight, total rate at least
$2\mu_o$ on the tube; a close if open, at least $2\mu_c$) and no other
events delivers the crossing with the \emph{opposite} parity, the tube
absorbing the $\delta/2$ kick. Both variants have probability bounded below,
uniformly over post-fill states. The chain therefore reaches each of the four
states $a=(\pm\theta,\sigma,\mp1)$, $\sigma\in\{\delta,2\delta\}$, in one step
with probability at least $\varepsilon>0$ from everywhere: these are
accessible atoms. Visits to a fixed atom then split the path into i.i.d.\
cycles whose mean length is finite (a geometric number of fills, each of
conditional mean at most $h/p$ by Step 3), so the chain is positive Harris
with an invariant law $\pi_F$ \cite[Ch.~VI]{asmussen03}. Write
$m(\theta):=\E_{\pi_F}[\tau_{k+1}-\tau_k]\le h/p<\infty$ for the stationary
mean time between opposite fills; mirror symmetry makes it the same for both
legs. \emph{This is the definition of $m(\theta)$ in \eqref{eq:rate}; it is
not the passage time of any diffusion.}

\emph{Step 5 (the reward at a fill).} At a sell fill the gap is
$G_F\ge\theta$, the half-spread is $\phi_F\in\{\delta/2,\delta\}$ read off the
parity at the fill, and the inventory flips from $+1$ to $-1$: $dq=-2$ and, by
\eqref{eq:wealth}, the $X$-marked cash flow is
$-G_F\cdot(-2)-\phi_F\cdot2=2(G_F-\phi_F)$. A buy fill at $-G_F'$ gives
$2(G_F'-\phi_F')$. A cycle contains one of each, so its reward has mean
$4\,\E_{\pi_F}[|G_F|-\phi_F]$ and its length mean $2m(\theta)$.

\emph{Step 6 (the regenerative ratio).} Rewards per fill are bounded by
$2(\theta+\delta)$ and cycles have finite mean length, so (P3), applied at the
atom of Step 4, gives
$R(\theta)=4\,\E_{\pi_F}[|G_F|-\phi_F]\big/\big(2m(\theta)\big)$, which is
\eqref{eq:rate}. The convergence is almost sure, so the pathwise rate equals this
stationary ratio, the ergodic identity, and the $\liminf$ in
\eqref{eq:problem} is attained from any start.

\emph{Step 7 (the overshoot bound).} The gap leaves $(-\theta,\theta)$ either
continuously, in which case $|G_F|=\theta$, or at a jump of $M$, whose size is at
most $\delta$ by \eqref{eq:mid}. Hence $\theta\le|G_F|<\theta+\delta$ pathwise,
and the same bounds hold under $\pi_F$.
\end{proof}

\subsection{Flatness is dominated}\label{pf:flat}

We record what is proved and what is cited. Let $q$ be admissible and consider an
\emph{excursion to flat}: the trader holds $q\in\{-1,1\}$, unwinds to $0$ at a
time $\varsigma_a$ when the gap is $G_a$, and re-enters the same $q$ at
$\varsigma_b>\varsigma_a$ when the gap is $G_b$.

\emph{Step 1 (the cost of the excursion).} By \eqref{eq:wealth} the $X$-marked
cash flows of the two trades are $q\,G_a-\phi$ at $\varsigma_a$ (where
$dq=-q$) and $-q\,G_b-\phi$ at $\varsigma_b$ (where $dq=+q$). Between them the
inventory is flat, so the only other term, $\int q\,dX$, vanishes pathwise.
Holding through the interval adds only the inventory martingale
$q\,(X_{\varsigma_b}-X_{\varsigma_a})$, of zero mean at the stopping times. In
expectation the excursion is therefore worth
\[
q\,(G_a-G_b)-2\phi\qquad\text{relative to holding,}
\]
the half-spread being paid on each side (we write the frozen tight-book
$\phi$; carrying the realised $\phi_F$ changes nothing in the comparison).

\emph{Step 2 (what Step 1 does and does not settle).} Two consequences are
immediate: an instantaneous round trip ($\varsigma_b=\varsigma_a$) pays
$2\phi$ for a traverse of zero and is never optimal, and the excursion is
\emph{ex post} profitable exactly when $q(G_a-G_b)>2\phi$. What Step 1 does
not settle is whether such an excursion can be chosen \emph{ex ante}: the
trader commits at one stopping time, the payoff depends on the gap at a later
one, and a pathwise inequality about a realised swing is not a statement about
conditional expectations. It does not by itself dominate holding.

\emph{Step 3 (what is cited).} That the optimum among \emph{all} admissible
strategies is a two-threshold flip rule, in particular that no flat zone is
opened between the legs, is the content of the verification theory for switching problems on a
one-dimensional Markov state with a monotone trip payoff: proved under
discounting in \cite{zhangzhang08,zervos13,leungli16}, carried to the
long-run average by vanishing-discount arguments \cite{abg12}. Mirror symmetry of
Definition~\ref{def:M} and of \eqref{eq:wealth} then makes the two thresholds
symmetric. We do not reprove those results, and we do not claim an elementary
argument in their place.

\subsection{The surrogate is the unique moment match}\label{pf:surrogate}

\begin{proof}[Proof of uniqueness]
Let $d\tilde G=-a\tilde G\,dt+s\,d\tilde Z$ with $a,s>0$; its conditional
mean is $\tilde G_t e^{-ah}$ and its stationary variance $s^{2}/2a$. Matching
the conditional mean \eqref{eq:condmean} at every horizon forces $a=\alpha$;
matching the stationary variance then forces
$s^{2}=2\alpha\sG^{2}=\sigma_X^{2}+\sigma_M^{2}$ by \eqref{eq:moments}:
these are the coefficients of \eqref{eq:ou}, and no other pair reproduces
both moments. What is \emph{not} matched is the exit mechanism: $\tilde G$
has continuous paths and exits an interval exactly at its boundary, while $G$
may overshoot by a jump of size at most $\delta$. Nothing in the two matched
moments constrains the jump structure; the surrogate replaces it, and this
replacement, together with the tight-book cost convention, is where
approximation enters: Section~\ref{sec:band} separates the proved
reward-side size, relative order $\delta/(\theta-\phi)$, from the heuristic
timing side, $\delta/\theta$.
\end{proof}

\section{Closed forms for the surrogate}\label{app:passage}

This appendix records the closed forms, for the surrogate \eqref{eq:ou}, of the
quantities used in Section~\ref{sec:band}. Throughout, $\mathcal N$ denotes the
standard normal distribution function; $\phi$ remains the half-spread.

\begin{proposition}[Passage time]\label{prop:passage}
For \eqref{eq:ou} with $\sG^{2}=\sigma^{2}/2\alpha$ and $u=\theta/\sG$,
\[
\tilde m(\theta)=\E_{-\theta}[T_{+\theta}]
=\frac{\sqrt{2\pi}}{\alpha}\int_{0}^{u}e^{x^{2}/2}\,dx
=\frac{\pi}{\alpha}\,\erfi\!\Big(\frac{u}{\sqrt2}\Big)<\infty .
\]
\end{proposition}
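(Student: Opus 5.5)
We compute the passage time with the classical scale-function/Green-function formula for one-dimensional diffusions and then reduce to the stated closed form by symmetry and a change of variables.

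\emph{Step 1 (the ODE).} For the generator $\mathcal L f=\tfrac{\sigma^{2}}{2}f''-\alpha g f'$ of \eqref{eq:ou}, the function $w(g):=\E_g[T_{+\theta}]$ for $g<\theta$ solves $\mathcal Lw=-1$ with $w(\theta)=0$. The lower end is $-\infty$, which is natural and non-attracting for the OU process, so we impose the growth condition that $w'$ stay bounded as $g\to-\infty$.

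\emph{Step 2 (the integral representation).} The scale density is $s(g)=e^{\alpha g^{2}/\sigma^{2}}=e^{g^{2}/2\sG^{2}}$ and the speed density is $\tfrac{2}{\sigma^{2}}e^{-g^{2}/2\sG^{2}}$. Integrating the ODE once, with the admissible boundary condition at $-\infty$, gives
\[
-w'(y)=\frac{2}{\sigma^{2}}\,e^{y^{2}/2\sG^{2}}\int_{-\infty}^{y}e^{-z^{2}/2\sG^{2}}\,dz .
\]
Hence
\[
\tilde m(\theta)=\int_{-\theta}^{\theta}\frac{2}{\sigma^{2}}e^{y^{2}/2\sG^{2}}\int_{-\infty}^{y}e^{-z^{2}/2\sG^{2}}\,dz\,dy .
\]
To justify this as the expected hitting time, apply Dynkin's formula to $w$ stopped at $T_{+\theta}\wedge T_{-n}\wedge t$ and let $n,t\to\infty$ by monotone convergence. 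Finiteness and a.s.\ finiteness of $T_{+\theta}$ follow from positive recurrence of OU. This localisation is the only technical point, and it is standard.

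\emph{Step 3 (symmetrisation).} Rescale by $y=\sG x$ and $z=\sG v$. Using $\sigma^{2}=2\alpha\sG^{2}$, the prefactor becomes $\tfrac{1}{\alpha}$, so
\[
\tilde m=\frac1\alpha\int_{-u}^{u}e^{x^{2}/2}\,\Phi(x)\,dx,\qquad \Phi(x):=\int_{-\infty}^{x}e^{-v^{2}/2}\,dv .
\]
The key observation is that $e^{x^{2}/2}$ is even and $\Phi(x)+\Phi(-x)=\sqrt{2\pi}$. Pairing $x$ with $-x$ therefore gives
\[
\tilde m=\frac{\sqrt{2\pi}}{\alpha}\int_0^{u}e^{x^{2}/2}\,dx .
\]
This symmetry is where the otherwise non-elementary double integral collapses, and I expect it to be the main (though small) obstacle: one must use the symmetric interval $[-\theta,\theta]$, not an asymmetric one.

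\emph{Step 4 (the erfi form).} Finally, substitute $x=\sqrt2\,t$. This gives $\int_0^u e^{x^2/2}dx=\sqrt2\int_0^{u/\sqrt2}e^{t^2}dt=\sqrt{\pi/2}\,\erfi(u/\sqrt2)$. Multiplying by $\sqrt{2\pi}/\alpha$ yields $\tilde m(\theta)=\tfrac{\pi}{\alpha}\erfi(u/\sqrt2)$, which is finite because the integrand is continuous on $[0,u]$.
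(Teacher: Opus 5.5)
Your proposal is correct and follows the paper's proof. It uses the same scale and speed densities, the same rescaling with $2\sG^{2}/\sigma^{2}=1/\alpha$, the same collapse of the double integral by the evenness of $e^{x^{2}/2}$ together with $\Phi(x)+\Phi(-x)=\sqrt{2\pi}$, and the same substitution $x=\sqrt2\,t$; the only difference is that you derive the Green-function formula from $\mathcal Lw=-1$ with Dynkin's formula, where the paper simply cites Karlin--Taylor.

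One small precision in that derivation. Monotone convergence only gives $\E_{-\theta}[T_{+\theta}]\le w(-\theta)$. The equality also needs the boundary term $w(-n)\,\mathbb P_{-\theta}(T_{-n}<T_{+\theta})$ to vanish; this is the term left after letting $t\to\infty$ at fixed $n$ by bounded convergence. It does vanish, because $w$ grows only logarithmically at $-\infty$ while that probability decays super-exponentially in $n$.
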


\begin{proof}
The scale and speed densities of \eqref{eq:ou} are
$\mathcal S(y)=e^{y^{2}/2\sG^{2}}$ and
$\mathcal M(y)=\tfrac{2}{\sigma^{2}}e^{-y^{2}/2\sG^{2}}$, both boundaries
natural, so passage times between interior points are finite and the classical
formula
$\E_a[T_b]=\int_a^{b}\mathcal S(y)\big(\int_{-\infty}^{y}\mathcal M(z)\,dz\big)dy$
applies \cite[Ch.~15]{karlintaylor81}. The inner integral is
$\tfrac{2\sG\sqrt{2\pi}}{\sigma^{2}}\,\mathcal N(y/\sG)$; substituting
$y=\sG x$ and using $2\sG^{2}/\sigma^{2}=1/\alpha$,
\[
\E_{-\theta}[T_{+\theta}]
=\frac{\sqrt{2\pi}}{\alpha}\int_{-u}^{u}e^{x^{2}/2}\,\mathcal N(x)\,dx
=\frac{\sqrt{2\pi}}{\alpha}\int_{0}^{u}e^{x^{2}/2}\,dx ,
\]
the second equality because $e^{x^{2}/2}$ is even and
$\mathcal N(x)+\mathcal N(-x)=1$. The substitution $x=\sqrt2\,t$ names the
integral $\sqrt{\pi/2}\,\erfi(u/\sqrt2)$, and the constants collapse to
$\pi/\alpha$.
\end{proof}

\begin{proposition}[The surrogate optimum]\label{prop:exact}
Let $\gamma=\phi/\sG>0$. The surrogate rate \eqref{eq:ratesur} has a unique
interior maximiser $\theta_{\mathrm D}=\sG u_{\mathrm D}$, where $u_{\mathrm D}>\gamma$ is the unique root
of
\begin{equation}\label{eq:dawson}
u-\gamma=\sqrt2\,\daw\!\Big(\frac{u}{\sqrt2}\Big),
\qquad
\daw(z)=e^{-z^{2}}\!\int_0^{z}e^{t^{2}}dt ,
\end{equation}
and $\tilde R(\theta_{\mathrm D})=\alpha\sG\sqrt{2/\pi}\,e^{-u_{\mathrm D}^{2}/2}$:
formula \eqref{eq:Rstar}, exact at this root. The band equation
\eqref{eq:arrhenius} and its root $u^{*}$ are the $u\gtrsim1$ asymptote, since
$\sqrt2\,\daw(u/\sqrt2)=1/u+O(u^{-3})$.
\end{proposition}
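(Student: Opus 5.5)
Write $u=\theta/\sG$ and $\gamma=\phi/\sG$, and substitute the closed form of Proposition~\ref{prop:passage} into \eqref{eq:ratesur}. With $I(u):=\int_0^u e^{x^{2}/2}\,dx$ this gives
\[
\tilde R(\sG u)=\frac{2\sG(u-\gamma)}{(\sqrt{2\pi}/\alpha)\,I(u)}=\alpha\sG\sqrt{\tfrac{2}{\pi}}\;f(u),\qquad f(u):=\frac{u-\gamma}{I(u)} .
\]
Maximising the rate is therefore the same as maximising the one-variable function $f$. On $(0,\gamma]$ we have $f\le0$, with $f(\gamma)=0$ (the myopic zero), while $f>0$ on $(\gamma,\infty)$. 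Moreover $f(u)\to0$ as $u\to\infty$, because $I$ grows like $e^{u^{2}/2}/u$. Any maximiser is therefore interior to $(\gamma,\infty)$, and the problem reduces to locating the sign changes of $f'$ there.

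Differentiating, and using $I'(u)=e^{u^{2}/2}$,
\[
f'(u)=\frac{I(u)-(u-\gamma)e^{u^{2}/2}}{I(u)^{2}}=-\frac{e^{u^{2}/2}}{I(u)^{2}}\,h(u),\qquad h(u):=u-\gamma-g(u),\quad g(u):=e^{-u^{2}/2}I(u).
\]
The substitution $x=\sqrt2\,t$ identifies $g(u)=\sqrt2\,\daw(u/\sqrt2)$, so the root condition $h=0$ is exactly \eqref{eq:dawson}.

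Uniqueness follows from a monotonicity argument on $h$. First, $g$ satisfies the linear ODE $g'=1-ug$, so $h'(u)=1-g'(u)=u\,g(u)>0$ for $u>0$; hence $h$ is strictly increasing. Second, $h(\gamma)=-g(\gamma)<0$ since $\gamma>0$. Third, $g(u)\to0$ (by the asymptotics of $I$), so $h(u)\to\infty$. Together these give a unique root $u_{\mathrm D}>\gamma$. Since $f'$ has the sign of $-h$, $f$ increases before $u_{\mathrm D}$ and decreases after it, so $\theta_{\mathrm D}=\sG u_{\mathrm D}$ is the unique maximiser.

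The value at the optimum comes from the first-order condition itself. At the root we have $I(u_{\mathrm D})=(u_{\mathrm D}-\gamma)\,e^{u_{\mathrm D}^{2}/2}$, so $f(u_{\mathrm D})=e^{-u_{\mathrm D}^{2}/2}$. This gives $\tilde R(\theta_{\mathrm D})=\alpha\sG\sqrt{2/\pi}\,e^{-u_{\mathrm D}^{2}/2}$, which is \eqref{eq:Rstar} holding exactly at this root.

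For the asymptote, I would use the standard expansion $\daw(z)=\tfrac1{2z}+\tfrac1{4z^{3}}+O(z^{-5})$, obtained by integrating $e^{t^{2}}$ by parts (equivalently, iterating $g=(1-g')/u$). This gives $\sqrt2\,\daw(u/\sqrt2)=1/u+O(u^{-3})$. Truncating after the first term turns \eqref{eq:dawson} into $u(u-\gamma)=1$, which is \eqref{eq:arrhenius} in units of $\sG$.

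The main point needing care is the pair of limits $f\to0$ and $g\to0$ at infinity. Both rest on the bound $I(u)\sim e^{u^{2}/2}/u$, which follows from L'H\^opital's rule applied to $I(u)u\,e^{-u^{2}/2}$; I would prove it first and then invoke it in each of the steps above. Everything else in the argument is elementary calculus.
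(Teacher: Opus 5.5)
Your proof is correct and follows essentially the paper's route: the same function $h(u)=u-\gamma-\sqrt2\,\daw(u/\sqrt2)$, strict monotonicity from the Dawson ODE (your $h'=ug$ is the paper's $1-\daw'(u/\sqrt2)$ after rescaling), the optimal value obtained by substituting the first-order condition back, and the same asymptotic expansion of $\daw$. The only difference is cosmetic. You read the maximum off the sign of $f'=-e^{u^{2}/2}h/I^{2}$, which gives unimodality directly and makes the endpoint limits unnecessary for uniqueness; the paper instead argues that a positive function vanishing at both ends has an interior maximum, which must sit at the unique stationary point.
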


\begin{proof}
By \eqref{eq:ratesur} and Proposition~\ref{prop:passage},
$\tilde R(u)=\tfrac{2\alpha\sG}{\pi}\,(u-\gamma)/\erfi(u/\sqrt2)$ for
$u>\gamma$; since $\tfrac{d}{du}\erfi(u/\sqrt2)=\sqrt{2/\pi}\,e^{u^{2}/2}$
and $\daw(z)=\tfrac{\sqrt\pi}{2}e^{-z^{2}}\erfi(z)$, the first-order
condition $\tilde R'(u)=0$ rearranges to \eqref{eq:dawson}.

\emph{Step 1 (a unique root).} Put $h(u)=u-\gamma-\sqrt2\,\daw(u/\sqrt2)$. Then
$h(\gamma)=-\sqrt2\,\daw(\gamma/\sqrt2)<0$, and $h(u)\to\infty$ because
$\daw(z)\to0$. By the chain rule $h'(u)=1-\daw'(u/\sqrt2)$; since
$\daw'(z)=1-2z\,\daw(z)<1$ for $z>0$, $h'>0$. So $h$ is strictly increasing and
has exactly one root $u_{\mathrm D}>\gamma$.

\emph{Step 2 (the root is the maximum).} $\tilde R$ is continuous on
$[\gamma,\infty)$ with $\tilde R(\gamma)=0$, $\tilde R>0$ on $(\gamma,\infty)$,
and $\tilde R(u)\to0$ as $u\to\infty$ because $\erfi(u/\sqrt2)$ grows like
$e^{u^{2}/2}$. A positive continuous function vanishing at both ends of an
interval attains an interior maximum, which is a stationary point; by Step~1
there is exactly one. At the root the first-order condition reads
$\erfi(u_{\mathrm D}/\sqrt2)=(u_{\mathrm D}-\gamma)\sqrt{2/\pi}\,e^{u_{\mathrm D}^{2}/2}$, and substituting
it back the factor $(u_{\mathrm D}-\gamma)$ cancels:
$\tilde R(u_{\mathrm D})=\tfrac{2\alpha\sG}{\pi}\big(\sqrt{2/\pi}\,e^{u_{\mathrm D}^{2}/2}\big)^{-1}
=\alpha\sG\sqrt{2/\pi}\;e^{-u_{\mathrm D}^{2}/2}$, which is \eqref{eq:Rstar} at
$u_{\mathrm D}$.

\emph{Step 3 (the asymptote).} From
$\daw(z)=\tfrac{1}{2z}+\tfrac{1}{4z^{3}}+O(z^{-5})$ one gets
$\sqrt2\,\daw(u/\sqrt2)=\tfrac1u+O(u^{-3})$, so \eqref{eq:dawson} becomes
$u-\gamma=1/u$, which is \eqref{eq:arrhenius}. Kramers' law \eqref{eq:kramers} is
the same asymptote read on the passage time itself:
$\tilde m(\theta)=\tfrac{\pi}{\alpha}\,\erfi(u/\sqrt2)
\sim\sqrt{2\pi}\,\tfrac{\sG}{\alpha\theta}\,e^{\theta^{2}/2\sG^{2}}$.
\end{proof}

\begin{lemma}[Second-order flatness]\label{lem:flat2}
$\tilde R$ is twice continuously differentiable on $(\phi,\infty)$ with
$\tilde R''(\theta_{\mathrm D})<0$, so
$\tilde R(\theta_{\mathrm D}+\varepsilon)
=\tilde R(\theta_{\mathrm D})+\tfrac12\tilde R''(\theta_{\mathrm D})\varepsilon^{2}
+o(\varepsilon^{2})$: a relative error $\varepsilon$ in the threshold costs
$O(\varepsilon^{2})$ in rate.
\end{lemma}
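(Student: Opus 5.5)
The plan is to work in the dimensionless variable $u=\theta/\sG$, using the closed form from the proof of Proposition~\ref{prop:exact},
\[
\tilde R(u)=\frac{2\alpha\sG}{\pi}\,\frac{u-\gamma}{E(u)},\qquad E(u):=\erfi\!\Big(\frac{u}{\sqrt2}\Big),
\]
and then transfer the conclusions to $\theta$ through the affine change $\theta=\sG u$.

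\emph{Smoothness.} The function $E$ is real-analytic, with $E'(u)=\sqrt{2/\pi}\,e^{u^{2}/2}$. It is also strictly positive on $(0,\infty)$, and $\gamma>0$. So on $(\phi,\infty)$ the quotient $\tilde R$ is $C^\infty$, and in particular $C^{2}$.

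\emph{Sign of the second derivative.} Write $\tilde R=c\,N/E$ with $N=u-\gamma$ and $c=2\alpha\sG/\pi$. Differentiating twice gives
\[
(N/E)''=\frac{N''E-NE''}{E^{2}}-\frac{2E'(N'E-NE')}{E^{3}}.
\]
At the root $u_{\mathrm D}$ the first-order condition $N'E=NE'$ holds; this is exactly the relation $E(u_{\mathrm D})=(u_{\mathrm D}-\gamma)E'(u_{\mathrm D})$ used in Step~2 of Proposition~\ref{prop:exact}. So the second term vanishes there. Since $N''=0$, we are left with
\[
\tilde R''(u_{\mathrm D})=-c\,\frac{N(u_{\mathrm D})\,E''(u_{\mathrm D})}{E(u_{\mathrm D})^{2}}.
\]
Here $E''(u)=\sqrt{2/\pi}\,u\,e^{u^{2}/2}>0$ for $u>0$, and $N(u_{\mathrm D})=u_{\mathrm D}-\gamma>0$ by Proposition~\ref{prop:exact}. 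Hence $\tilde R''(u_{\mathrm D})<0$. The chain rule carries this over to $\theta$: $\tilde R''(\theta_{\mathrm D})=\sG^{-2}\,\tilde R''(u_{\mathrm D})<0$.

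\emph{Expansion and relative cost.} Taylor's theorem with Peano remainder for a $C^{2}$ function, together with $\tilde R'(\theta_{\mathrm D})=0$, gives the stated expansion. For the relative form, take the threshold error to be $\varepsilon\theta_{\mathrm D}$. The relative loss in rate is then
\[
\tfrac12\,\big|\tilde R''(\theta_{\mathrm D})\big|\,\theta_{\mathrm D}^{2}\,\varepsilon^{2}\,/\,\tilde R(\theta_{\mathrm D})+o(\varepsilon^{2}),
\]
which is $O(\varepsilon^{2})$. The constant can be made explicit by substituting $E=NE'$ at the root: the relative curvature in $u$ is $u_{\mathrm D}/(u_{\mathrm D}-\gamma)$.

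The only step needing care is the cancellation at the root in the second derivative. The first-order condition has to be used in exactly the form $N'E=NE'$ before differentiating further; otherwise the sign of the remaining terms is not evident.
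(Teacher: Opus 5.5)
Your proof is correct and is essentially the paper's: the paper differentiates the numerator $\erfi(u/\sqrt2)-(u-\gamma)\sqrt{2/\pi}\,e^{u^{2}/2}$ of $\tilde R'$ and finds it strictly negative at $u_{\mathrm D}$, which is your quotient-rule computation once the first-order condition removes the cross term. Your explicit relative curvature $u_{\mathrm D}/(u_{\mathrm D}-\gamma)$ is a correct small addition that the paper does not state.
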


\begin{proof}
Differentiability is clear, $\erfi$ being entire and nonvanishing on
$(\gamma,\infty)$, and $\theta_{\mathrm D}$ is an interior maximum, so
$\tilde R'(\theta_{\mathrm D})=0$ and Taylor's theorem applies; the strict sign is
transversality. In the variable $u$ the numerator of $\tilde R'$ is
$N(u)=\erfi(u/\sqrt2)-(u-\gamma)\sqrt{2/\pi}\,e^{u^{2}/2}$, and at the root
$N'(u_{\mathrm D})=-(u_{\mathrm D}-\gamma)\,u_{\mathrm D}\sqrt{2/\pi}\,e^{u_{\mathrm D}^{2}/2}<0$: $\tilde R'$
crosses zero strictly downward.
\end{proof}


\begin{thebibliography}{99}\small

\bibitem{ahuja17} S.~Ahuja, G.~Papanicolaou, W.~Ren, and T.-W.~Yang.
Limit order trading with a mean reverting reference price.
\emph{Risk and Decision Analysis}, 6(2):121--136, 2017.



\bibitem{almgren01} R.~Almgren and N.~Chriss.
Optimal execution of portfolio transactions.
\emph{Journal of Risk}, 3(2):5--39, 2001.

\bibitem{andersen03} T.~G.~Andersen, T.~Bollerslev, F.~X.~Diebold, and
P.~Labys. Modeling and forecasting realized volatility.
\emph{Econometrica}, 71(2):579--625, 2003.

\bibitem{abg12} A.~Arapostathis, V.~S.~Borkar, and M.~K.~Ghosh.
\emph{Ergodic Control of Diffusion Processes}. Encyclopedia of
Mathematics and its Applications 143, Cambridge University Press,
Cambridge, 2012.

\bibitem{asmussen03} S.~Asmussen.
\emph{Applied Probability and Queues}. 2nd edition, Springer, New York,
2003.

\bibitem{as08} M.~Avellaneda and S.~Stoikov.
High-frequency trading in a limit order book.
\emph{Quantitative Finance}, 8(3):217--224, 2008.

\bibitem{bacry13a} E.~Bacry, S.~Delattre, M.~Hoffmann, and J.-F.~Muzy.
Modelling microstructure noise with mutually exciting point processes.
\emph{Quantitative Finance}, 13(1):65--77, 2013.

\bibitem{bacry13b} E.~Bacry, S.~Delattre, M.~Hoffmann, and J.-F.~Muzy.
Some limit theorems for Hawkes processes and application to financial
statistics. \emph{Stochastic Processes and their Applications},
123(7):2475--2499, 2013.

\bibitem{bertram10} W.~K.~Bertram.
Analytic solutions for optimal statistical arbitrage trading.
\emph{Physica A}, 389(11):2234--2243, 2010.

\bibitem{bouchaud09} J.-P.~Bouchaud, J.~D.~Farmer, and F.~Lillo.
How markets slowly digest changes in supply and demand.
In \emph{Handbook of Financial Markets: Dynamics and Evolution}, Elsevier,
2009.

\bibitem{cjp15} \'A.~Cartea, S.~Jaimungal, and J.~Penalva.
\emph{Algorithmic and High-Frequency Trading}.
Cambridge University Press, 2015.



\bibitem{contdelarrard13} R.~Cont and A.~de~Larrard.
Price dynamics in a Markovian limit order market.
\emph{SIAM Journal on Financial Mathematics}, 4(1):1--25, 2013.

\bibitem{daleyverejones03} D.~J.~Daley and D.~Vere-Jones.
\emph{An Introduction to the Theory of Point Processes}. 2nd edition,
Springer, New York, 2003.

\bibitem{dayrirosenbaum15} K.~Dayri and M.~Rosenbaum.
Large tick assets: implicit spread and optimal tick size.
\emph{Market Microstructure and Liquidity}, 1(1):1550003, 2015.

\bibitem{delattre13} S.~Delattre, C.~Y.~Robert, and M.~Rosenbaum.
Estimating the efficient price from the order flow: a Brownian Cox
process approach. \emph{Stochastic Processes and their Applications},
123(7):2603--2619, 2013.

\bibitem{durbinkoopman12} J.~Durbin and S.~J.~Koopman.
\emph{Time Series Analysis by State Space Methods}. 2nd edition,
Oxford University Press, Oxford, 2012.



\bibitem{glf13} O.~Gu\'eant, C.-A.~Lehalle, and J.~Fernandez-Tapia.
Dealing with the inventory risk: a solution to the market making problem.
\emph{Mathematics and Financial Economics}, 7(4):477--507, 2013.

\bibitem{gueant16} O.~Gu\'eant.
\emph{The Financial Mathematics of Market Liquidity: From Optimal Execution
to Market Making}. Chapman \& Hall/CRC, 2016.

\bibitem{guilbaudpham13} F.~Guilbaud and H.~Pham.
Optimal high-frequency trading with limit and market orders.
\emph{Quantitative Finance}, 13(1):79--94, 2013.

\bibitem{hansenlunde06} P.~R.~Hansen and A.~Lunde.
Realized variance and market microstructure noise.
\emph{Journal of Business and Economic Statistics}, 24(2):127--161, 2006.

\bibitem{hasbrouck95} J.~Hasbrouck.
One security, many markets: determining the contributions to price
discovery. \emph{Journal of Finance}, 50(4):1175--1199, 1995.

\bibitem{hawkes71} A.~G.~Hawkes.
Spectra of some self-exciting and mutually exciting point processes.
\emph{Biometrika}, 58(1):83--90, 1971.

\bibitem{huanglehalle15} W.~Huang, C.-A.~Lehalle, and M.~Rosenbaum.
Simulating and analyzing order book data: the queue-reactive model.
\emph{Journal of the American Statistical Association},
110(509):107--122, 2015.

\bibitem{karlintaylor81} S.~Karlin and H.~M.~Taylor.
\emph{A Second Course in Stochastic Processes}. Academic Press, New York,
1981.

\bibitem{kramers40} H.~A.~Kramers. Brownian motion in a field of force
and the diffusion model of chemical reactions. \emph{Physica},
7(4):284--304, 1940.

\bibitem{leungli16} T.~Leung and X.~Li.
\emph{Optimal Mean Reversion Trading: Mathematical Analysis and Practical
Applications}. World Scientific, 2016.

\bibitem{liptonlopez20} A.~Lipton and M.~L\'opez~de~Prado.
A closed-form solution for optimal mean-reverting trading strategies.
Working paper, arXiv:2003.10502, 2020.

\bibitem{meyntweedie93} S.~P.~Meyn and R.~L.~Tweedie.
Stability of Markovian processes III: Foster--Lyapunov criteria for
continuous-time processes. \emph{Advances in Applied Probability},
25(3):518--548, 1993.

\bibitem{obizhaevawang13} A.~Obizhaeva and J.~Wang.
Optimal trading strategy and supply/demand dynamics.
\emph{Journal of Financial Markets}, 16(1):1--32, 2013.

\bibitem{pulido26} S.~Pulido, M.~Rosenbaum, and E.~Sfendourakis.
Understanding the worst-kept secret of high-frequency trading.
\emph{Finance and Stochastics}, 30(2):329--396, 2026.
doi:10.1007/s00780-026-00585-9.

\bibitem{rabechini19} L.~Rabechini Amaral and A.~Papanicolaou.
Price impact of large orders using Hawkes processes.
\emph{The ANZIAM Journal}, 61(2):161--194, 2019.

\bibitem{robertrosenbaum11} C.~Y.~Robert and M.~Rosenbaum.
A new approach for the dynamics of ultra-high-frequency data: the model
with uncertainty zones.
\emph{Journal of Financial Econometrics}, 9(2):344--366, 2011.

\bibitem{roll84} R.~Roll.
A simple implicit measure of the effective bid--ask spread in an efficient
market. \emph{Journal of Finance}, 39(4):1127--1139, 1984.

\bibitem{sfendourakis25} E.~Sfendourakis.
Multi-dimensional queue-reactive model and signal-driven models: a unified
framework. Working paper, arXiv:2506.11843, 2025.

\bibitem{zervos13} M.~Zervos, T.~C.~Johnson, and F.~Alazemi.
Buy-low and sell-high investment strategies.
\emph{Mathematical Finance}, 23(3):560--578, 2013.

\bibitem{zhangzhang08} H.~Zhang and Q.~Zhang.
Trading a mean-reverting asset: buy low and sell high.
\emph{Automatica}, 44(6):1511--1518, 2008.

\bibitem{zhangmykland05} L.~Zhang, P.~A.~Mykland, and Y.~A\"it-Sahalia.
A tale of two time scales: determining integrated volatility with noisy
high-frequency data. \emph{Journal of the American Statistical
Association}, 100(472):1394--1411, 2005.

\end{thebibliography}
\end{document}